\documentclass[superscriptaddress,prx,aps,cp,amsmath,amssymb,reprint,floatfix,longbibliography]{revtex4-2}

\usepackage{lmodern}
\usepackage{graphicx}% Include figure files
\usepackage{dcolumn}% Align table columns on decimal point
\usepackage{bm}% bold math
\usepackage[utf8]{inputenc}
\usepackage[T1]{fontenc}
\usepackage{color}
\usepackage[colorlinks,citecolor=blue,linkcolor=blue,urlcolor=blue,bookmarks=false,hypertexnames=true]{hyperref}
\usepackage[inkscapelatex=false]{svg}

\usepackage{verbatim}
\usepackage{braket}
\usepackage{dsfont}
\usepackage{booktabs}
\usepackage{graphicx}
\usepackage[percent]{overpic}
\usepackage{tikz}
\usepackage{xcolor}
\usetikzlibrary{plotmarks}   % for triangle*, square*, diamond*, 

\usepackage{xcolor}
\definecolor{myPurple}{HTML}{8A2BE2}   % RALLY_T
\definecolor{myLightBlue}{HTML}{56B4E9} % dCRAB pwc
\definecolor{myGreen}{HTML}{009E73}     % RALLY_A
\definecolor{orangered}{HTML}{FF4500}

\usepackage{amsthm}

\graphicspath{{figures/}}

\theoremstyle{definition}

\newtheorem{theorem}{Theorem}[section]
\newtheorem*{theorem*}{Theorem}

\begin{document}

\newcommand{\fraunhoferIAF}{\affiliation{Fraunhofer Institute for Applied Solid State Physics IAF, Tullastr. 72, 79108 Freiburg, Germany}}

\newcommand{\unipdPhysics}{%
  \affiliation{Dipartimento di Fisica e Astronomia ``G.~Galilei'', Università di Padova, I-35131 Padova, Italy}%
}

\newcommand{\paduaQTRC}{%
  \affiliation{Padua Quantum Technologies Research Center, Università degli Studi di Padova, I-35131 Padova, Italy}%
}

\newcommand{\INFN}{%
  \affiliation{Istituto Nazionale di Fisica Nucleare (INFN), Sezione di Padova, I-35131 Padova, Italy}%
}

\author{Marco Dall'Ara}\fraunhoferIAF 
\author{Martin Koppenhöfer}\fraunhoferIAF
\author{Florentin Reiter}\fraunhoferIAF
\author{Thomas Wellens}\fraunhoferIAF
\author{Simone Montangero}\unipdPhysics \paduaQTRC \INFN
\author{Walter Hahn}\email{walter.hahn@iaf.fraunhofer.de}
\fraunhoferIAF

\title{Random layers for quantum optimal control with exponential expressivity}

\begin{abstract}
A long-standing problem in quantum optimal control is finding an optimal pulse structure that leads to an efficient exploration of the unitary space with a minimal number of optimization parameters. We solve this problem by constructing parametrized pulse sequences from random-amplitude pulses grouped in layers with one optimization parameter per layer. We show that, when increasing the number of pulses, the resulting random unitaries converge exponentially fast to the uniform Haar-random ensemble, thus providing for an efficient exploration of the unitary space. Grouping the pulses into layers allows for lowering the total number of optimization parameters. We focus on two random-layer (RALLY) methods: In RALLY$_\text{T}$, time durations of the layers are optimized while the pulse amplitudes are randomly chosen beforehand, possibly even from a few discrete values. RALLY$_\text{A}$ optimizes a joint scaling factor of the random pulse amplitudes in each layer. We numerically validate the two methods by applying them to problems of unitary synthesis, ground-state preparation and state transfer in different quantum systems. For all problems considered, both methods approach an information-theoretic lower bound on the number of optimization parameters and outperform other commonly used algorithms. In gradient-free optimization, the RALLY methods are orders of magnitude more accurate with fewer figure-of-merit evaluations. The RALLY methods are promising for advancing quantum machine learning and variational quantum algorithms.
\end{abstract}

\maketitle

\section{Introduction}
Precise and fast control of quantum systems is essential for many problems in quantum science and technology. Quantum-optimal-control theory provides methods to achieve such control by designing pulse sequences that, for example, steer quantum systems to target states or induce target dynamics~\cite{qoc_review_2015,koch-review,Ansel_2024}. These pulse sequences are obtained by optimizing a suitable figure of merit either directly on a quantum device (closed-loop) or using a numerical model of the quantum system (open-loop)~\cite{krotov,GRAPE,CRAB,dcrab,goat,Schäfer_2021, sivak,adhoc,rand_benchmark,rabitz_review, NMR_hyp,qoc_phase_transition,opc_res1,dolde_2014,waldherr,sensing_neu,Ryd_sim_GHZ,Kokail_2019}.

With increasing size of controllable quantum systems, several conceptual challenges emerge in quantum optimal control. Finding optimal control in increasingly large systems while retaining high accuracy requires the number of optimization parameters to scale with the Hilbert-space dimension. This complicates the optimization of parameters and leads to an exponential scaling of computational resources for open-loop optimization. Including experimental constraints makes this problem even more difficult. When restricting pulse amplitudes to take only certain discrete values, which is interesting for cases of limited experimental capabilities and for imposing structures on the pulse sequences, the optimization becomes substantially more challenging~\cite{PMP,dridi_discrete}. This includes bang–bang or dynamical-decoupling structures~\cite{DD,bangbang}, which reduce the coupling to the environment and which were shown to correspond to time-optimal control in special cases~\cite{bang_off2,single-spin}. These challenges call for novel control-optimization schemes for enabling both current and future generations of quantum experiments and devices.

In this article, we introduce a family of efficient parametrized pulse sequences that address the above challenges and solves the long-standing problem of finding an optimal pulse structure for quantum optimal control~\cite{dcrab,dCRAB_bases}.
The pulse sequences are assembled in $N_\text{L}$ layers, each of which is a sequence of $N_\text{P}$ (layer size) random constant-amplitude pulses characterized by one optimization parameter. 
We focus on two particular powerful random-layer (RALLY) methods: RALLY$_\text{T}$ and RALLY$_\text{A}$, illustrated in Fig.~\ref{fig:RALLY_illustration}. RALLY$_\text{T}$ optimizes the time durations of the layers keeping the random amplitudes fixed, while RALLY$_\text{A}$ optimizes a joint factor that scales the randomly chosen amplitudes in a layer keeping the pulse durations fixed. We first rigorously show that, upon sampling over the RALLY optimization parameters \textit{and} control amplitudes, the ensemble of unitaries reachable by this pulse-sequence structure converges exponentially fast with $N_\text{P}N_\text{L}$ to the Haar-random ensemble~\cite{Haar_convergence}. This provides exponential expressivity, i.e., efficient exploration of the entire unitary space. However, when treating all those random variables as optimization parameters to solve a generic task, this leads to a challenging optimization problem in practice. This motivates our layered pulse-sequence structure, where the individual control amplitudes are not directly optimized and groups of $N_\text{P}$ pulses share a joint parameter. 
We provide compelling numerical evidence that, for the overwhelming majority of \textit{individual random choices of control amplitudes}, the ensemble of reachable unitaries exhibits the same convergence behavior with $N_\text{P}N_\text{L}$ even though we sample only $N_\text{L}$ RALLY parameters. This feature of grouping pulses is key to ensure that the expressivity of our pulse sequences increases with the layer size $N_\text{P}$ without changing the number of optimization parameters $N_\text{L}$, and enables efficient optimization. For RALLY$_\text{T}$, this even holds when the pulse amplitudes are randomly chosen from a discrete set of values. RALLY$_\text{A}$, in turn, can be neatly incorporated into the existing CRAB~\cite{CRAB} and dCRAB~\cite{dcrab} algorithms by interpreting it as a new enhanced set of basis functions.

Beyond exponential expressivity, the RALLY$_\text{T}$ method provides other important advantages. Constraints of the experimental setting, such as a finite bandwidth, are readily accommodated. Further, an additional penalty term on the total time duration can be introduced to reduce the total pulse-sequence duration without compromising optimization efficiency. RALLY$_\text{T}$ is computationally efficient to simulate making it applicable to larger systems than common quantum-optimal-control algorithms. Both methods RALLY$_\text{T}$ and RALLY$_\text{A}$ can be used in gradient-based and gradient-free optimization, hence they can be applied in closed-loop and open-loop pulse optimization. 

To demonstrate the performance of the RALLY methods, we perform numerical simulations for three problems: (i) unitary synthesis of a three-qubit gate and (ii) ground-state preparation on a globally driven Rydberg-atom platform, and (iii) state transfer in an Ising spin chain. For these problems, the RALLY methods approach an information-theoretic lower bound on the number of optimization parameters~\cite{Control_landscape_Rabitz,Lloyd_2014} and exhibit clear advantages when compared to other commonly used algorithms. In particular, we observe that, in gradient-free optimization, both RALLY methods achieve several orders of magnitude higher accuracies with fewer figure-of-merit evaluations. We demonstrate that, with minor modifications, RALLY$_\text{T}$ yields pulse sequences with reduced total time. By using a gradient-based optimization routine and randomly selecting pulse amplitudes from two discrete values, RALLY$_\text{T}$ accurately optimizes systems with two more qubits than the GRAPE algorithm~\cite{GRAPE} within the same time budget, while also accommodating realistic bandwidth limitations. 

In the next section (Sec.~\ref{sec_rally}), we introduce the RALLY methods and discuss their advantages. Section~\ref{sec_res} presents benchmarks of the RALLY methods. We conclude in Sec.~\ref{sec_conclude}.

\begin{figure}[t]
  \includegraphics[width=0.4\textwidth]{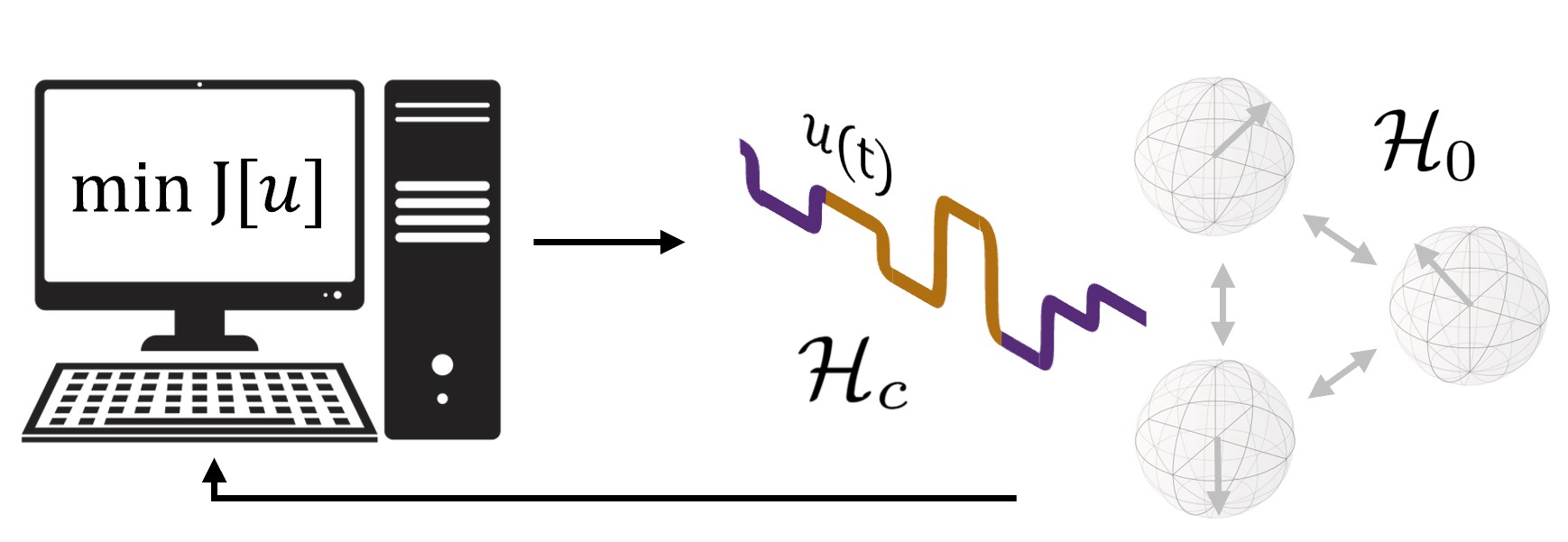}\par
  \vspace{0.5em} % adjust the gap (e.g., 1em, 10pt, 1cm)
  \includegraphics[width=0.44\textwidth]{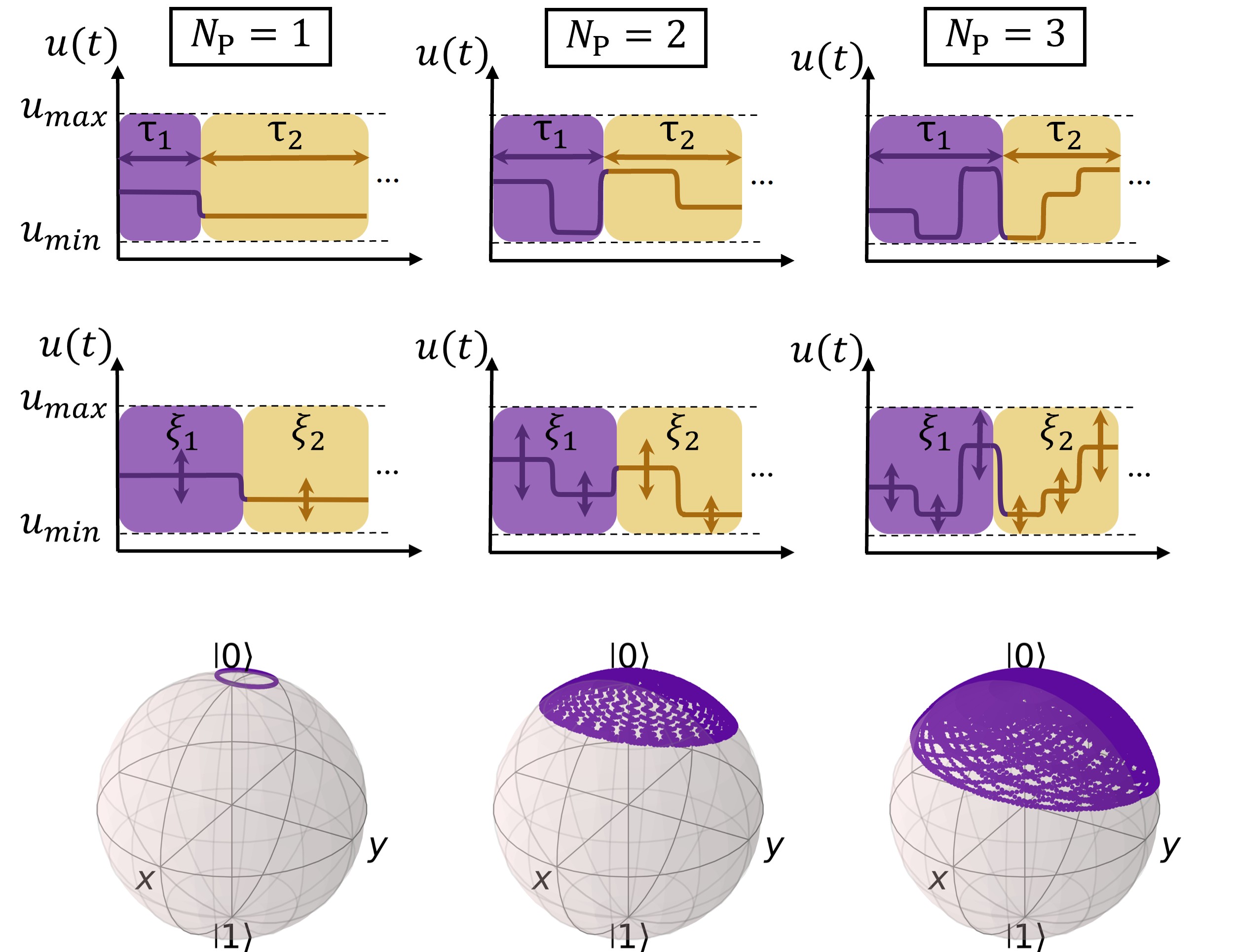}
  \begin{picture}(0,0)
    \put(-237,232){\textbf{(a)}}
    \put(-237,165){\textbf{(b)}}
    \put(-10,160){\rotatebox{-90}{\footnotesize\bfseries RALLY$_\text{T}$}}    
    % \put(-237,115){\textbf{(c)}}
        \put(-10,107){\rotatebox{-90}{\footnotesize\bfseries RALLY$_\text{A}$}}    \put(-237,50){\textbf{(c)}}
    
  \end{picture}

  \caption{
    \textbf{(a)} Sketch of a pulse-optimization procedure in quantum optimal control. The dynamics of a quantum system with internal Hamiltonian ${\cal H}_{0}$ is controlled by applying a pulse sequence $u(t)$ that couples to the system via the control Hamiltonian ${\cal H}_{c}$. The optimal pulse is found by minimizing a figure of merit $J[u]$ in a feedback loop. \textbf{(b)} In RALLY, the control pulse sequence $u(t)$ is divided into $N_\text{L}$ layers (different colors) with $N_\text{P}$ pulses per layer. The pulse amplitudes are chosen randomly. In RALLY$_\text{T}$, the layer durations \(\tau_\ell\) are optimized. In RALLY$_\text{A}$, the joint scaling factors \(\xi_\ell\) of the control amplitudes are optimized in each layer. \textbf{(c)} Illustration of how the expressivity increases for larger $N_\text{P}$. When changing $\tau$, the dynamics of a single layer traces out a trajectory on the Bloch sphere. For larger $N_\text{P}$, this trajectory becomes more complex and covers an increasing portion of the Bloch sphere. 
    Single-spin trajectories are shown for the Hamiltonian $\mathcal{H}(t) = 5\sigma_z + u(t)\sigma_x$, initial state $\lvert 0 \rangle$, evolution time $\tau = 100$ and $u(t)$ composed of fixed control amplitudes whose values are randomly chosen from the interval $[-2,2]$.
  }
  \label{fig:RALLY_illustration}
\end{figure}

\section{Random-Layer methods} \label{sec_rally}

We consider a quantum system described by a Hamiltonian operator of the form ${\cal H}(t)={\cal H}_{0}+u(t){\cal H}_{c}$, where the internal Hamiltonian ${\cal H}_0$ is usually referred to as the drift Hamiltonian and ${\cal H}_{c}$ is the control Hamiltonian with the corresponding time-dependent control field $u(t)$~\footnote{The results of this article are readily extended to the general case of many control fields $u_k(t)$ with corresponding control Hamiltonians ${\cal H}_{k}$ such that the total Hamiltonian reads ${\cal H}(t)={\cal H}_{0}+\sum_{k}u_{k}(t){\cal H}_{k}$.}. We define a pulse as the evolution over a period of time with a constant control field.

\subsection{Definition of the RALLY pulse sequences}
The pulses in a RALLY pulse sequence are arranged in layers
% \begin{equation}
    $U_{\text{R}}(\theta_1, .., \theta_{N_\text{L}})=
     \prod_{\ell=1}^{N_\text{L}}
     U_{\ell} (\theta_\ell),$
% \end{equation}
where $U_{\ell} (\theta_\ell)$ is the propagator describing layer $\ell$ with optimization parameter $\theta_\ell$ and $N_\text{L}$ is the total number of layers. For each layer, $U_{\ell}(\theta_\ell)$ is a series of $N_\text{P}$ constant-amplitude pulses (layer size), where the pulse amplitudes are chosen randomly from a predefined set.

We propose two RALLY methods with optimization parameters being either layer durations ($\theta_\ell=\tau_\ell$) in RALLY$_\text{T}$ or scaling factors ($\theta_\ell=\xi_\ell$) for control amplitudes in RALLY$_\text{A}$, cf. Fig.~\ref{fig:RALLY_illustration}. We denote the corresponding propagators as $U_{\text{R,T}}(\tau_1, .., \tau_{N_\text{L}})$ and $U_{\text{R,A}}(\xi_1, .., \xi_{N_\text{L}})$, respectively. For $U_{\text{R,T}}(\tau_1, .., \tau_{N_\text{L}})$, we define 
\begin{equation} \label{eq:RALLY_T}
U_{\text{R,T}}(\bm{\tau})=
    \prod_{\ell=1}^{N_\text{L}}
     \prod_{p=1}^{N_\text{P}}
     \exp\biggl[-i\frac{\tau_\ell}{N_\text{P}}\bigl({\cal H}_{0} + u^{(\ell,p)}{\cal H}_{c}\bigr)\biggr],
\end{equation}
where $\bm{\tau} = (\tau_1, ..., \tau_{N_\text{L}})$ and pulse durations $\tau_\ell/N_\text{P}$ are chosen to be the same for all pulses in a layer for simplicity~\footnote{In general, the individual pulse durations within a layer could be chosen as arbitrary fractions of $\tau_\ell$.}. For $U_{\text{R,A}}(\xi_1, .., \xi_{N_\text{L}})$, we define
\begin{equation} \label{eq:RALLY_AMP}
U_{\text{R,A}}(\bm{\xi})=
    \prod_{\ell=1}^{N_\text{L}}
     \prod_{p=1}^{N_\text{P}}
     \exp\biggl[-i\Delta t\bigl({\cal H}_{0} + \xi_\ell\,u^{(\ell,p)}{\cal H}_{c}\bigr)\biggr],
\end{equation}
where $\bm{\xi} = (\xi_1, ..., \xi_{N_\text{L}})$ and all pulses have the same time duration $\Delta t$ for simplicity. Importantly, for both pulse sequences, the amplitudes of the individual pulses $u^{(\ell,p)}$ are randomly chosen. Note that, due to the presence of the drift Hamiltonian $\mathcal{H}_0$, the two propagators $U_{\text{R,T}}(\bm{\tau})$ and $U_{\text{R,A}}(\bm{\xi})$ are generally different. Both propagators depend on the control-field amplitudes $u^{(\ell,p)}$ but we indicate this dependence only when necessary.

The RALLY$_\text{A}$ propagator $U_{\text{R,A}}(\bm{\xi})$ can be viewed as a new basis choice for the CRAB~\cite{CRAB} or dCRAB~\cite{dcrab} algorithms with control field $u(t)=\sum_\ell \xi_{\ell}\,\phi_{\ell}(t)$, where each basis function $\phi_{\ell}(t)$ is a concatenation of piecewise-constant segments with amplitudes $\{u^{(\ell,p)}\}_p$; see Appendix~\ref{app:dCRAB_GRAPE} for details of the dCRAB algorithm.

\subsection{Convergence properties of the RALLY pulse sequences}
To investigate the convergence properties of the RALLY pulse sequences, let us consider the ensemble of unitary operators $U_{\text{R}}(\bm{\theta})$ obtained by sampling the optimization parameters $\bm{\theta}$ (with $\bm{\theta}=\bm{\tau}$ for $U_{\text{R,T}}(\bm{\tau})$ and $\bm{\theta}=\bm{\xi}$ for $U_{\text{R,A}}(\bm{\xi})$). The main question is whether this ensemble of unitaries approximates a Haar-random ensemble on $U(N)$, which corresponds to a uniform distribution in the unitary group. Exhibiting properties of the Haar ensemble with respect to the optimization parameters entails a uniform coverage of the unitary group, thereby making the optimization of an arbitrary target unitary more efficient. 

In the following, we first discuss and analytically prove a theorem showing that the ensemble of reachable unitaries approaches the Haar ensemble exponentially with $N_\text{L} N_\text{P}$ when, in addition to the sampling of optimization parameters $\bm{\theta}$ of the RALLY pulse sequences, the control amplitudes \(u^{(\ell,p)}\) are also sampled. From a practical perspective, this corresponds to evaluating the expressivity of a pulse sequence, when both $\bm{\theta}$ and \(u^{(\ell,p)}\) are optimized. However, the random control amplitudes \(u^{(\ell,p)}\) remain unchanged in the RALLY pulse sequences. We provide compelling numerical evidence that the conclusions of the above theorem hold even for almost any fixed random choice of \(u^{(\ell,p)}\), thereby arriving at the setting of the RALLY methods. This property of randomly chosen unitaries is similar to the typicality of randomly chosen quantum states, e.g., canonical typicality~\cite{gemmer_mahler,can_typ,pop}.

Let us start by considering the following important theorem: For a fully controllable quantum system, the ensemble of unitaries reachable by the RALLY propagator $U_{\text{R}}(\bm{\theta})$, where \textit{both} optimization parameters $\bm{\theta}$ and control amplitudes $u^{(\ell,p)}$ are sampled from a continuous distribution,  converges exponentially fast in the number of pulses $N_\text{L} N_\text{P}$ to the uniform Haar ensemble. Moreover, if the control amplitudes are drawn from a set of discrete values, the theorem holds by considering a weaker convergence criterion~\cite{Haar_convergence}. For a rigorous treatment and a proof, we refer the reader to Appendix~\ref{app:t-unitary convergence}, and for the definition of full controllability to Appendix~\ref{app:cond}.

\begin{figure}[b]
  \centering
  \begin{overpic}[width=0.95\columnwidth]{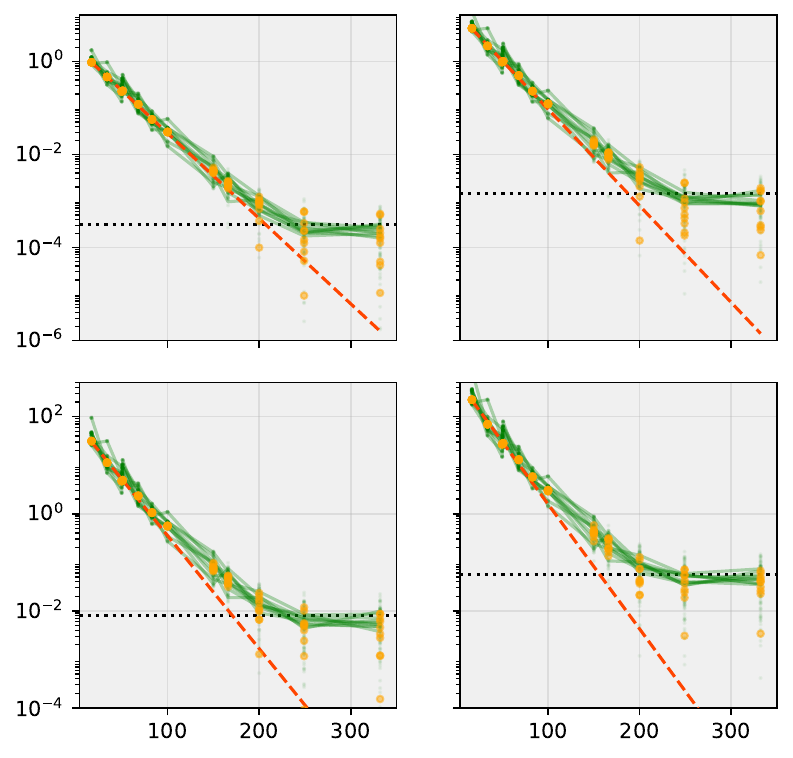}
    % Common x-axis label (centered at bottom)
    \put(30,-1){\makebox(0,0){\small $N_\text{L}N_\text{P}$}}
    \put(78,-1){\makebox(0,0){\small $N_\text{L}N_\text{P}$}}

    \put(-2.5,70){{\small $\delta_1$}}
    \put(52,70){{\small $\delta_2$}}
    \put(-2.5,25){{\small $\delta_3$}}
    \put(52,25){{\small $\delta_4$}}

    \put(62,43){%
      \begin{tikzpicture}
        \node[
          fill=white,
          fill opacity=0.8,
          text opacity=1,
          draw=black,
          rounded corners=2pt,
          inner sep=2pt
        ] {
          \begin{tikzpicture}[x=1em,y=1em]
    
            % RALLY_A: green, line + square
            % \draw[orange, thick, dashed] (0,2) -- (1.8,2);
            \draw[orange, thick,
                  mark=*,
                  mark size=1.5pt, 
                  mark options={solid}]
              plot coordinates {(0.9,3)};
            \node[right=0.6em] at (1.8,3) {\scriptsize $\delta_t$};

            \draw[myGreen] (0,2) -- (1.8,2);
            \draw[myGreen,
                  mark=*,
                  mark size=1.5pt, 
                  mark options={solid}]
              plot coordinates {(0.9,2)};
            \node[right=0.6em] at (1.8,2) {\scriptsize $\delta_t$ with fixed $u^{(\ell, p)}$};

            \draw[orangered, thick, dashed] (0,1) -- (1.8,1);
            \node[right=0.6em] at (1.8,1) {\scriptsize Exponential fit};

            \draw[black, thick, dotted] (0,0) -- (1.8,0);
            \node[right=0.6em] at (1.8,0) {\scriptsize Expected plateau};
          \end{tikzpicture}
        };
      \end{tikzpicture}%
    }

  \end{overpic}
  \caption{Exponential convergence of the first four moments of the ensemble of unitaries reachable by the RALLY$_\text{T}$ propagator $U_\text{R,T}$ to the corresponding moments of a Haar-random distribution. The difference $\delta_t$~\eqref{eq:delta_t} between the corresponding moments is shown for $t=1,2,3,4$ as a function of $N_{\mathrm{L}}N_{\mathrm{P}}$. Orange points show $\delta_t$ with layer durations drawn uniformly from $[0,10]$ and control amplitudes drawn uniformly from $[-1,1]$. The dashed red line is a linear fit on the logarithmic scale. Importantly, results for $\delta_t$ follow the same decay curve also when \textit{only layer durations} are sampled and the control amplitudes are fixed (shown by the 10 green lines). We attribute the fluctuations of the green lines to the finite system sizes; see Appendix~\ref{app_typ}. The plateaus at larger values of $N_{\mathrm{L}}N_{\mathrm{P}}$ are due to the finite sample size and they are consistent with the expected values from the Haar distribution (black dotted lines); see Appendix~\ref{App:Harr_conv} for details. For each value of $N_{\mathrm{L}}N_{\mathrm{P}}$, where $N_{\mathrm{L}}\in\{17, 50, 83\}$ and $N_{\mathrm{P}}\in\{1,2,3,4\}$, we use $10^7$ samples and repeat the calculation 10 times. One set of randomly chosen values of $h_{ix}$ and $h_{iz}$ for the Hamiltonian ${\cal H}_\text{I}$~\eqref{eq:random_spin_H} was used. }
  \label{fig:conv_plot}
\end{figure}

We corroborate the above theorem with numerical simulations of an Ising spin chain consisting of \(n=3\) \mbox{spins-1/2} in a longitudinal and transverse magnetic field
\begin{equation} \label{eq:random_spin_H}
{\cal H}_\text{I} =
\sum_{i=1}^{n} h_{ix} \sigma_{ix}+\sum_{i=1}^{n} h_{iz} \sigma_{iz}+J(t)\sum_{i=1}^{n-1}\sigma_{ix}\sigma_{(i+1)x},
\end{equation}
where $h_{ix}$ ($h_{iz}$) are the spin-dependent magnetic-field components along the $x$-axis ($z$-axis), $\sigma_{i\alpha}$ is the Pauli operator of spin $i$ along direction $\alpha$ and the interaction coefficient $J(t)$ is the time-dependent control field. The components $h_{ix}$ and $h_{iz}$ are randomly chosen from the interval \([0.5,1]\), which ensures the absence of symmetries and full controllability~\cite{dcrab}. We present the analysis for RALLY$_\text{T}$, but based on the optimization performance described below, we anticipate similar convergence behavior for RALLY$_\mathrm{A}$ and for RALLY$_\mathrm{T}$ with control amplitudes drawn from a discrete set of values.

We investigate the convergence of the ensemble of reachable unitaries to the Haar ensemble by studying the convergence of the first four moments of the corresponding distribution, which is equivalent to convergence to unitary designs of order up to 4. The $t$-th moment converges if and only if the following expression vanishes~\cite{Scott_2008}
\begin{equation}
\label{eq:delta_t}
    \delta_t
    = \bigg|
      \mathbb{E}_{x,y \sim \nu}
      \bigg[
        \bigl|\operatorname{tr}\bigl(U_{\mathrm{R}}(x)^\dagger U_{\mathrm{R}}(y)\bigr)\bigr|^{2t}
      \bigg]
      - t!
      \bigg|,
\end{equation}
where the expectation is taken over two independent samples $x,y = (\bm{\tau}, u^{(\ell,p)}) \sim \nu$ with $\nu$ being the joint probability distribution of layer durations and control amplitudes. The numerical results shown in Fig.~\ref{fig:conv_plot} exhibit an exponential decay of $\delta_t$ for the moments $t\leq4$ with $N_{\mathrm{L}} N_{\mathrm{P}}$, which is in accordance with the theorem.

Remarkably, we observe the same exponential decay also for each \textit{individual random choice} of the control amplitudes as also shown in Fig.~\ref{fig:conv_plot}. At first sight, an exponential decay with $N_\text{L}$ is anticipated because only $N_\text{L}$ layer durations are sampled. However, the typical behavior observed for each individual random choice of the control amplitudes implies that each individual pulse entering a RALLY pulse sequence contributes equally to the expressivity of the corresponding propagator, hence the exponential decay with $N_\text{L}N_\text{P}$. We attribute the small fluctuations of $\delta_t$ for the individual random choices of the control amplitudes on top of the exponential decay to the finite size of the quantum system. When reducing the number of spins in the Ising spin chain, these fluctuations become larger; see Appendix~\ref{app_typ}. This typical behavior provides the foundation for the efficacy of the RALLY methods.

From the above results follows that the convergence properties of the RALLY pulse sequences can be enhanced by increasing the layer size $N_\text{P}$ without changing the number of optimization parameters $N_\text{L}$. Thereby, over-parametrization can be avoided at the expense of a larger total number of pulses (see Appendix~\ref{app:cond} for details on over-parametrization). Qualitatively, this can be understood on the basis of a single-spin dynamics, cf. Fig.~\ref{fig:RALLY_illustration}c. For $N_\text{P}=1$, the spin dynamics within a layer is confined to a circle on the Bloch sphere for any random choice of the pulse amplitude. For $N_\text{P}>1$, the dynamics is more complicated because the effective Hamiltonian ${\cal H}_\text{eff}$ for a layer is time-dependent due to the series of $N_\text{P}$ rotations around different axes: $U_{\text{R,T}}(\bm{\tau})=\prod_{\ell=1}^{N_\text{L}}\exp[-i\tau_\ell{\cal H}_\text{eff}(\tau_\ell)]$ for RALLY$_\text{T}$, cf. Eq.~\eqref{eq:RALLY_T}. Therefore, the corresponding path covers a larger region of the Bloch sphere in a loose analogy to space-filling curves that explore a two-dimensional manifold. This corresponds to a more efficient exploration of the unitary group by $U_{\text{R}}$. The above intuition naturally extends to higher-dimensional settings.

\subsection{Advantages of the RALLY$_\text{T}$ method}
One of the key features of the RALLY$_\text{T}$ method is that the pulse amplitudes $u^{(\ell,p)}$ are randomly chosen \emph{prior} to the actual optimization and remain unchanged thereafter. Therefore, by tailoring the predefined set from which the coefficients \(u^{(\ell,p)}\) are drawn, one can explicitly enforce hardware constraints, such as a maximum-amplitude condition \(\lvert u^{(\ell,p)}\rvert \le u_{\max}\), or restrict the pulse amplitudes to a discrete set of allowed values. The latter is interesting, for example, for cases of limited experimental capabilities and for imposing particular structures on the pulse sequences. For example, dynamical-decoupling structures~\cite{DD} are known to reduce the coupling to the environment. Further, pulse sequences similar to bang-bang $u^{(\ell,p)}\!\in\!\{\pm u_{\max}\}$~\cite{bangbang} and bang-singular-bang protocols $u^{(\ell,p)}\in\{0,u_{\max}\}$~\cite{bang_off2} were shown to be time optimal for the rotation of a single spin-1/2~\cite{single-spin}.

\begin{figure}[b]
  \centering
    \begin{overpic}[width=0.31\textwidth]{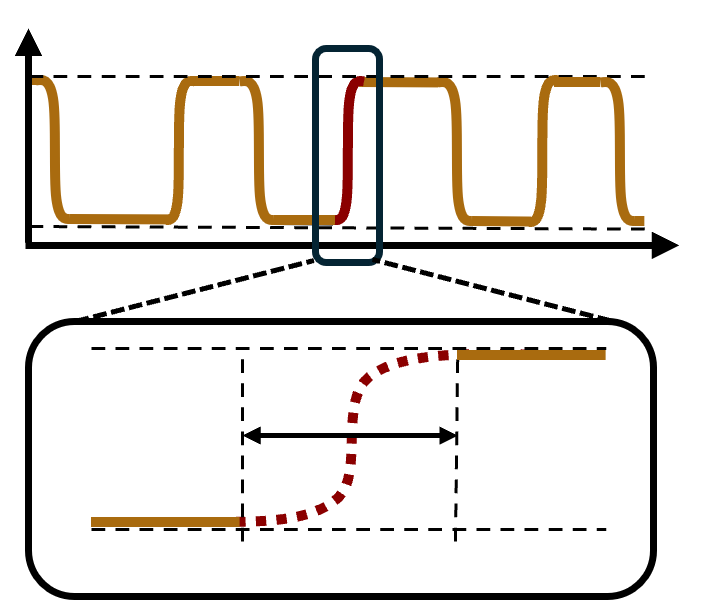}

    % Example labels/formulas:
    \put(-3,80){$u$}
    \put(98,48){$t$}
    \put(-10,73){$u_{\text{max}}$}
    \put(-10,53){$u_{\text{min}}$}
    \put(-3,80){$u$}
    \put(39,28){$\tau_{\text{rise}}$}
    \put(45,4){$U_{\text{rise}}$}

  \end{overpic}
  \caption{
   Illustration of the interpolation between successive control amplitudes to accommodate the hardware's finite bandwidth in RALLY$_\text{T}$. A limited bandwidth $\Delta\Omega$ imposes a finite rise time $\tau_\text{rise}$, which is implemented using an interpolating function that depends on the amplitudes but not on the pulse durations. Since RALLY$_\text{T}$ samples amplitudes before the optimization and optimizes time durations, the propagators for these interpolations denoted by $U_{\text{rise}}$ can be precomputed and reused during optimization.  
  }
  \label{fig:bandwidth}
\end{figure}

Optimizing layer durations enables resource-efficient simulations: For the RALLY$_\text{T}$ propagator $U_\text{R,T}$~\eqref{eq:RALLY_T}, the Hamiltonians ${\cal H}^{(\ell,p)} = {\cal H}_{0} +  u^{(\ell,p)}{\cal H}_{c}$ can be diagonalized in advance, such that the figure of merit and its gradient can be evaluated across parameter values using only matrix–matrix or matrix-vector multiplications. %Marco: Removed the changes
This is particularly advantageous when \(u^{(\ell,p)}\) are drawn from a small discrete set. This contrasts with methods that require repeated exponentiation of matrices or matrix diagonalizations for each figure-of-merit evaluation.

Finite-bandwidth constraints, which typically pose challenges for piecewise-constant pulse sequences, are readily accommodated in RALLY$_\text{T}$. An interpolation between the pulse amplitudes, that reflects hardware limitations, can be introduced as sketched in Fig.~\ref{fig:bandwidth}. Since the pulse amplitudes are chosen prior to the optimization and the unitary evolutions describing the interpolations do not depend on the pulse durations, the unitary evolutions can be numerically calculated in advance and reused during the optimization with negligible computational overhead.

An additional penalty term (for example, of the form $[\sum_{\ell=1}^{N_\text{L}} \tau_\ell]^2$) allows for reducing the total pulse-sequence duration along with the actual optimization. For quantum devices, which restrict changes of the control amplitudes to periodic grid points in time~\cite{PaganoErrorBudget}, the optimized layer durations can be rounded to the nearest point in time provided that the solution found is sufficiently stable with respect to small perturbations in time; see Appendix~\ref{app:roboustness}. A subsequent fine-tuning optimization with RALLY$_\text{A}$ for the control amplitudes could also be performed.

\section{Numerical simulations and benchmarking} \label{sec_res}
We demonstrate the advantages of the RALLY methods by means of numerical simulations for three paradigmatic control tasks: unitary synthesis, ground-state preparation, and state transfer. For unitary synthesis and ground-state preparation, we consider a globally driven Rydberg-atom platform, and for state transfer, an Ising spin chain. To benchmark the RALLY methods, we compare their performance to that of the dCRAB and GRAPE algorithms. Details of the latter algorithms are provided in Appendix~\ref{app:dCRAB_GRAPE}.

\subsection{Unitary synthesis of a three-qubit gate} \label{sec:synthesis}
We now focus on the unitary synthesis on a globally-driven Rydberg-atom simulator platform with $n$ atoms (qubits), for which the Hamiltonian reads~\cite{rydberg}
\begin{align}
\label{eq:analog_sim}
    {\cal H}_\text{R}(\Omega, \Delta) &= {\cal H}_\text{0} + \Omega {\cal H}_\text{x} - \Delta {\cal H}_\text{z}\\
    &= \sum_{i<j}^n J_{ij} n_i n_j+ \Omega \sum_{i=1}^n  \sigma_{ix} - \Delta \sum_{i=1}^n \sigma_{iz},\nonumber
\end{align}
where \(J_{ij}\cong 1/r^6_{ij}\) describes the Van der Waals interaction between atoms $i$ and $j$ with \(r_{ij}\) being the interatomic distance, $n_i=(\mathbf{1}-\sigma_{iz})/2$ is the occupation operator, and \(\Omega\) and \(\Delta\) are the Rabi frequency and the detuning of the external global driving, respectively. The positions of the Rydberg atoms are chosen such that the Hamiltonian ${\cal H}_\text{R}(\Omega, \Delta)$ in Eq.~\eqref{eq:analog_sim} does not exhibit any symmetries (see Appendix~\ref{app:CNOT_coordinates}) and the system is fully controllable, as we verified by means of dynamical-Lie-algebra analysis. The positions of the Rydberg atoms remain unchanged during the dynamics. We set the global Rabi frequency to \(\Omega=1\ \mathrm{MHz}\) for the entire pulse sequence; hence, the detuning \(\Delta\) is the only control field [${\cal H}_\text{R}(\Omega, \Delta)={\cal H}_\text{R}(\Delta)$] varied within the hardware limits (see Appendix~\ref{app:constraints}).

\begin{figure}
  \centering
  \begin{overpic}[width=0.95\columnwidth]{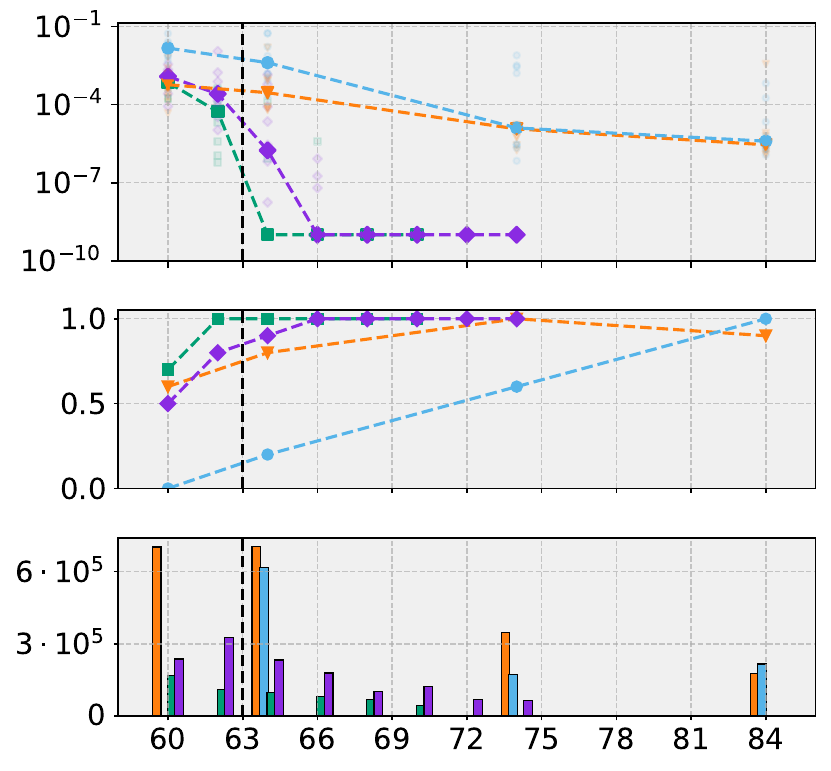}
    % Common x-axis label (centered at bottom)
    \put(55,-1){\makebox(0,0){\small $\#$ optimization parameters}}

    \put(-2.5,66){\rotatebox{90}{\small Infidelity $J_\text{u}$}}
    \put(-2.5,33){\rotatebox{90}{\small P$(J_\text{u}\leq 10^{-3})$}}
    \put(-2.5,4){\rotatebox{90}{\small $\#$ FoM evals}}

    \put(60,20){%
      \begin{tikzpicture}
        \node[
          fill=white,
          fill opacity=0.8,
          text opacity=1,
          draw=black,
          rounded corners=2pt,
          inner sep=2pt
        ] {
          \begin{tikzpicture}[x=1em,y=1em]
            % dCRAB Fourier: orange, line + triangle down
            \draw[orange, thick, dashed] (0,4) -- (1.8,4);
            \draw[orange, thick,
                  mark=triangle*,
                  mark size=1.5pt, 
                  mark options={solid, fill=orange, rotate=180}]
              plot coordinates {(0.9,4)};
            \node[right=0.6em] at (1.8,4) {\scriptsize dCRAB};
    
            % dCRAB pwc: light blue, line + circle
            \draw[myLightBlue, thick, dashed] (0,3) -- (1.8,3);
            \draw[myLightBlue, thick,
                  mark=*,
                  mark size=1.5pt, 
                  mark options={solid}]
              plot coordinates {(0.9,3)};
            \node[right=0.6em] at (1.8,3) {\scriptsize dCRAB pwc};
    
            % RALLY_A: green, line + square
            \draw[myGreen, thick, dashed] (0,2) -- (1.8,2);
            \draw[myGreen, thick,
                  mark=square*,
                  mark size=1.5pt, 
                  mark options={solid}]
              plot coordinates {(0.9,2)};
            \node[right=0.6em] at (1.8,2) {\scriptsize RALLY$_\text{A}$};
    
            % RALLY_T: purple, line + diamond
            \draw[myPurple, thick, dashed] (0,1) -- (1.8,1);
            \draw[myPurple, thick,
                  mark=square*,
                  mark size=1.5pt, 
                  mark options={solid, rotate=45}]
              plot coordinates {(0.9,1)};
            \node[right=0.6em] at (1.8,1) {\scriptsize RALLY$_\text{T}$};
    
            % N^2 - 1: black line only
            \draw[black, thick, dashed] (0,0) -- (1.8,0);
            \node[right=0.6em] at (1.8,0) {\scriptsize $N^2 - 1$};
          \end{tikzpicture}
        };
      \end{tikzpicture}%
    }

  \end{overpic}
  \caption{
    Performance of RALLY methods for the unitary synthesis of a 3-qubit gate~\eqref{eq:target_unitary} in a globally-driven Rydberg-atom platform. As a function of the number of optimization parameters (\(N_\text{L}\) for RALLY), the figure shows the median unitary infidelity $J_\text{u}$~\eqref{eqn_unitary} (top), success probability for achieving \(J_\text{u}\le 10^{-3}\) over 10 optimization runs (middle) and the median number of figure-of-merit evaluations required to reach \(J_\text{u}\le 10^{-3}\) (bottom). The RALLY methods with \(N_\text{P}=5\) rapidly converge to the target precision of $10^{-9}$ and efficiently saturate the information-theoretic lower bound (vertical dashed line; see Appendix~\ref{app:cond}) on the number of optimization parameters, whereas dCRAB with Fourier and piecewise constant (pwc) bases converges more slowly. The RALLY methods also need fewer figure-of-merit (FoM) evaluations than dCRAB. Missing bars in the bottom panel correspond to cases where no successful optimizations were achieved. All methods use the same optimizer (adaptive Nelder–Mead) and identical stopping criteria. 
  }
  \label{fig:CNOT}
\end{figure}

For the unitary-synthesis problem, we consider the figure of merit~\cite{vetter}
\begin{equation} \label{eqn_unitary}
    J_\text{u}[u]\equiv1-\frac{1}{2^{2n}}\Bigl|\operatorname{Tr}\bigl(U_{\text{T}}^{\dagger}U[u]\bigr)\Bigr|^{2},
\end{equation}
which is sometimes referred to as unitary infidelity. The target unitary is the three-qubit gate
\begin{equation} \label{eq:target_unitary}
  U_{\text{T}} = \mathrm{CNOT}_{1,2}\,\otimes\,\mathds{1}_3,
\end{equation}
where the indices indicate the qubit number. Since the driving in the Rydberg system equally affects all three qubits (atoms) and the interactions between the qubits are always present, the challenge is not only to implement the $\mathrm{CNOT}_{1,2}$ gate but also to leave the third qubit unchanged at the same time ($\mathds{1}_3$). Such problems are important, for example, to realize precise gates on a quantum computer while suppressing crosstalk, i.e., without affecting nearby qubits (see, for example, Ref.~\cite{ketterer}).

\subsubsection{Performance of the RALLY methods}
The results shown in Fig.~\ref{fig:CNOT} for the unitary infidelity for different $N_\text{L}$ and $N_\text{P}=5$ indicate that the RALLY methods become highly accurate once the number of optimization parameters surpasses the information-theoretic lower bound \(N^{2}-1 = 63\)~\cite{Control_landscape_Rabitz,Lloyd_2014}; see Appendix~\ref{app:cond}.  When compared to the dCRAB algorithm, both with Fourier and piecewise-constant (pwc) basis, the RALLY methods are more accurate while requiring fewer figure-of-merit evaluations. The RALLY methods also exhibit higher success rates for the infidelity to be below $10^{-3}$. This implies that, for practical applications where an infidelity of about $10^{-3}$ is sufficient, RALLY is advantageous in terms of the success rate and the number of figure-of-merit evaluations required.

For the above results, the Nelder–Mead optimization algorithm was used with identical settings for both RALLY and dCRAB (see Appendix~\ref{app:CNOT_coordinates} for further details). The values for the detuning $\Delta$ were limited to the interval \(\Delta\in[-10,10]\,\mathrm{MHz}\), which corresponds to realistic hardware constraints~\cite{Silverio2022pulseropensource}. To ensure that these constraints are not violated in RALLY$_\text{A}$, we enforce $\bm{\xi} \in [0,1]^{N_\text{L}}$. The total evolution time $T$ for dCRAB and RALLY$_\text{A}$ was chosen to be approximately the same as for RALLY$_\text{T}$. For the Fourier basis, the maximum bandwidth \(\Delta\Omega\) is chosen to satisfy \(\Delta\Omega\, T \geq N^{2} - 1\)~\cite{Lloyd_2014,dcrab}, in order not to limit the information content of the control pulse.

\subsubsection{Higher probability of convergence for larger \(N_\text{P}\)}

\begin{figure}[t]
  \centering
  \begin{overpic}[width=0.90\columnwidth]{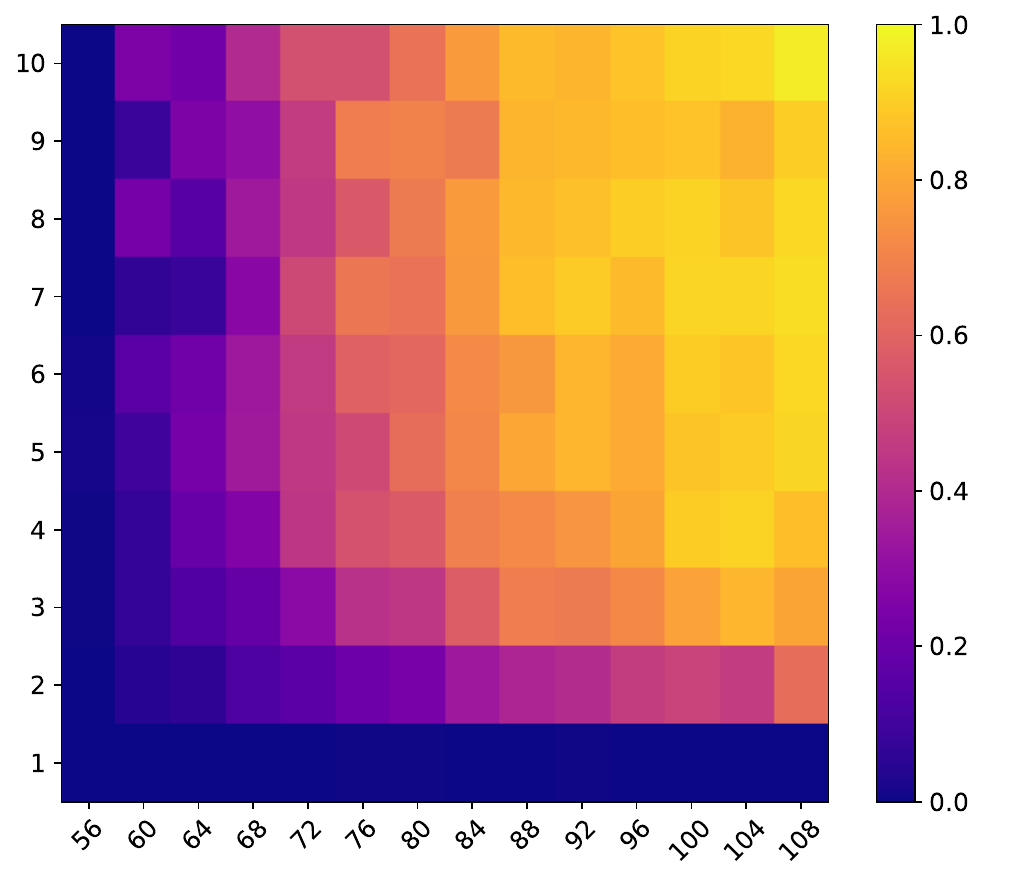}
    % Common x-axis label (centered at bottom)
    \put(50,-1){\makebox(0,0){\small Total pulse-sequence duration [$\mu$s]}}

    \put(-4,32){\rotatebox{90}{\small Layer size $N_\text{P}$}}
    \put(98,32){\rotatebox{90}{\small P$(J_\text{u}\leq 10^{-3})$}}
    % \put(-2.5,4){\rotatebox{90}{\small $\#$ FoM evals}}

  \end{overpic}
  \caption{
    Success probability for achieving \(J_\text{u}\le 10^{-3}\) for different total pulse-sequence durations and layer sizes \(N_\text{P}\) using RALLY$_\text{T}$. The results indicate that the effectiveness of the method improves as the layer size increases while keeping the total pulse-sequence duration and the number of layers fixed. For each pair of total duration and \(N_\text{P}\), we run optimizations with different random initializations of \(\tau_\ell\) and group the results by their final total pulse-sequence durations, with at least 100 runs in each group. Due to the abundance of local minima in the optimization landscape, the layer durations typically remain close to their initial values. The total number of optimization parameters is \(N_L = 66\), slightly exceeding the theoretical lower bound of 63. 
  }
  \label{fig:CNOT-heatmap}

\end{figure}

To assess the impact of a higher expressivity due to a larger layer size \(N_\text{P}\), we consider the success probability for achieving an accurate approximation of \(J_\text{u}\le 10^{-3}\). This probability is shown in Fig.~\ref{fig:CNOT-heatmap} for different values of the total pulse-sequence duration and the layer size \(N_\text{P}\). The results in Fig.~\ref{fig:CNOT-heatmap} imply that a layer size \mbox{$N_\text{P}=3$} is sufficient to achieve an accurate approximation at a relatively small total pulse-sequence duration, which corresponds to three pulses per optimization parameter (layer)~\footnote{To speed up the calculation, we used a gradient-based optimization (L-BFGS-B~\cite{LBFGSB}) but gradient-free methods yield qualitatively similar results.}. These results also confirm that the enhanced performance of the RALLY methods at $N_\text{P}>1$ is not due to longer time durations of pulse sequences, which could be caused by the higher number of pulses. We attribute the low performance of RALLY\(_\text{T}\) at relatively small $N_\text{P}$ in Fig.~\ref{fig:CNOT-heatmap} to an insufficient expressivity of the propagator \(U_{\text{R,T}}(\bm{\tau})\). Further, the low performance of RALLY\(_\text{T}\) at small total pulse-sequence durations ($\lesssim 60 \mu \text{s}$) is likely to be caused by the proximity to quantum-speed-limit bounds; see Appendix~\ref{app:cond}.

\begin{figure*}[t]
  \centering
    \begin{minipage}[t]{0.32\textwidth}
      \centering
      \begin{overpic}[width=\textwidth]{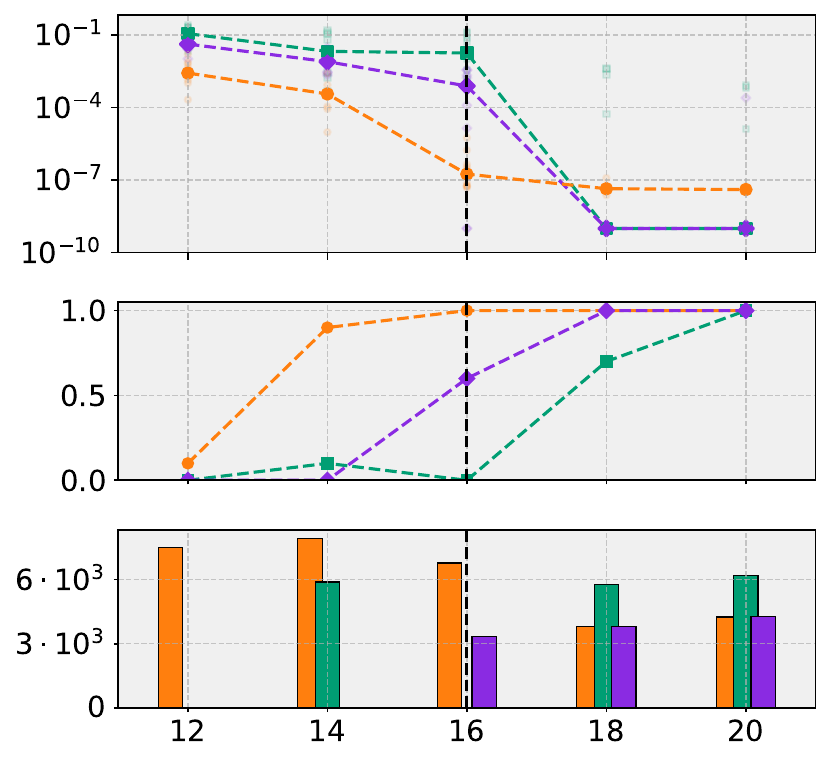}
        % Panel title
        \put(50,94){\makebox(0,0){\textbf{\small (a) $\mathrm{H}_2$}}}
    
        % Axes labels (apply to all panels)
        \put(55,-1){\makebox(0,0){\small $\#$ optimization parameters}}
        \put(-2,60){\rotatebox{90}{\tiny Accuracy $J_\text{e}[H]$}}
        % \put(-9,66){\rotatebox{90}{\small Accuracy $J_\text{e}$ [H]}}
        \put(-3,30){\rotatebox{90}{\tiny P($J_\text{e}\leq 10^{-3}$)}}
        \put(-2, 2){\rotatebox{90}{\tiny $\#$ FoM evals}}
    
        % Legend specific to H2
        \put(15,55){%
          \begin{tikzpicture}
            \node[
              fill=white,
              fill opacity=0.8,
              text opacity=1,
              draw=black,
              rounded corners=2pt,
              inner sep=1pt
            ]{%
              \begin{tikzpicture}[x=1em,y=0.7em] % compact
                % dCRAB (Fourier)
                \draw[orange, thick, dashed] (0,3) -- (1.8,3);
                \draw[orange, thick,
                      mark=*,
                      mark size=1.5pt,
                      mark options={solid, fill=orange, rotate=180}]
                  plot coordinates {(0.9,3)};
                \node[right=0.35em] at (1.8,3) {\tiny dCRAB};
    
                % RALLY_A
                \draw[myGreen, thick, dashed] (0,2) -- (1.8,2);
                \draw[myGreen, thick,
                      mark=square*,
                      mark size=1.5pt,
                      mark options={solid}]
                  plot coordinates {(0.9,2)};
                \node[right=0.35em] at (1.8,2) {\tiny RALLY$_\text{A}$};
    
                % RALLY_T
                \draw[myPurple, thick, dashed] (0,1) -- (1.8,1);
                \draw[myPurple, thick,
                      mark=square*,
                      mark size=1.5pt,
                      mark options={solid, rotate=45}]
                  plot coordinates {(0.9,1)};
                \node[right=0.35em] at (1.8,1) {\tiny RALLY$_\text{T}$};
    
                % 2N-2: black line only
                \draw[black, thick, dashed] (0,0) -- (1.8,0);
                \node[right=0.35em] at (1.8,0) {\tiny $2N-2$};
              \end{tikzpicture}%
            };
          \end{tikzpicture}%
        }
      \end{overpic}
    \end{minipage}
    \hfill
    % =============== (b) NO3 ===============
    \begin{minipage}[t]{0.325\textwidth}
      \centering
      \begin{overpic}[width=\textwidth]{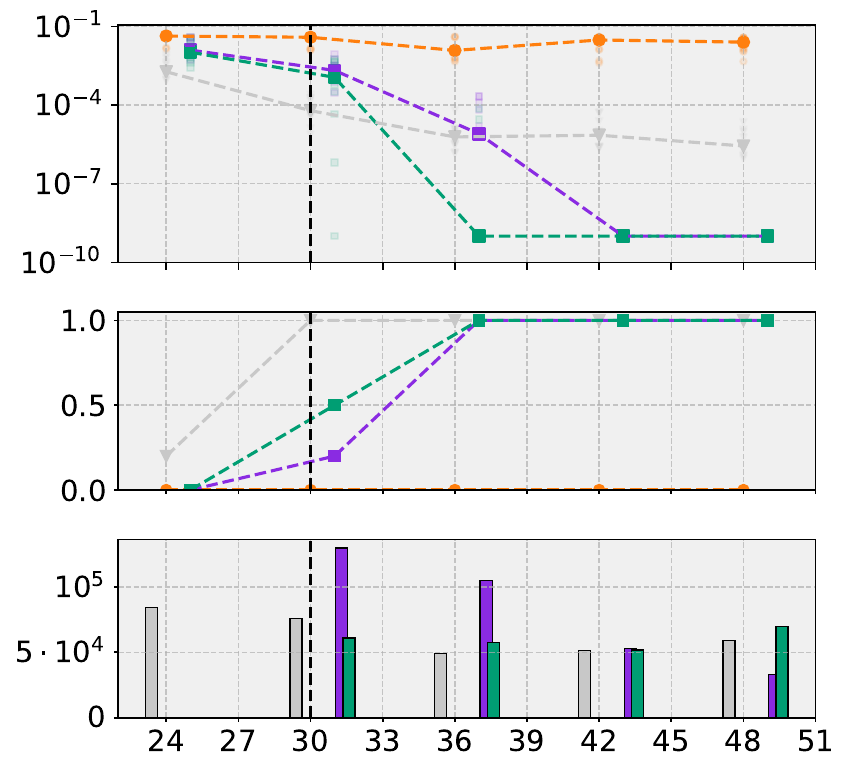}
        % Panel title
        \put(50,92){\makebox(0,0){\textbf{\small (b) $\mathrm{NO}_3$}}}
    
        % x-axis label (y-axis shared with left panel)
        \put(55,-1){\makebox(0,0){\small $\#$ optimization parameters}}
    
        % Legend specific to NO3
        \put(60,25){%
          \begin{tikzpicture}
            \node[
              fill=white,
              fill opacity=0.8,
              text opacity=1,
              draw=black,
              rounded corners=2pt,
              inner sep=1pt
            ]{%
              \begin{tikzpicture}[x=1em,y=0.7em]
                % dCRAB, T = 500 μs
                \draw[orange, thick, dashed] (0,4) -- (1.8,4);
                \draw[orange, thick,
                      mark=*,
                      mark size=1.5pt,
                      mark options={solid, fill=orange, rotate=180}]
                  plot coordinates {(0.9,4)};
                \node[right=0.35em] at (1.8,4) {\tiny dCRAB};
    
                % dCRAB, T = 3000 μs (lighter gray)
                \draw[lightgray, thick, dashed] (0,3) -- (1.8,3);
                \draw[lightgray, thick,
                      mark=triangle*,
                      mark size=1.5pt,
                      mark options={solid, rotate=180}]
                  plot coordinates {(0.9,3)};
                \node[right=0.35em] at (1.8,3) {\tiny dCRAB long};
    
                % RALLY_A
                \draw[myGreen, thick, dashed] (0,2) -- (1.8,2);
                \draw[myGreen, thick,
                      mark=square*,
                      mark size=1.5pt,
                      mark options={solid}]
                  plot coordinates {(0.9,2)};
                \node[right=0.35em] at (1.8,2) {\tiny RALLY$_\text{A}$};
    
                % RALLY_T
                \draw[myPurple, thick, dashed] (0,1) -- (1.8,1);
                \draw[myPurple, thick,
                      mark=square*,
                      mark size=1.5pt,
                      mark options={solid, rotate=45}]
                  plot coordinates {(0.9,1)};
                \node[right=0.35em] at (1.8,1) {\tiny RALLY$_\text{T}$};
    
                % 2N-2: black line only
                \draw[black, thick, dashed] (0,0) -- (1.8,0);
                \node[right=0.35em] at (1.8,0) {\tiny $2N-2$};
              \end{tikzpicture}%
            };
          \end{tikzpicture}%
        }
      \end{overpic}
    \end{minipage}
    \hfill
    % =============== (c) CH ===============
    \begin{minipage}[t]{0.32\textwidth}
      \centering
      \begin{overpic}[width=\textwidth]{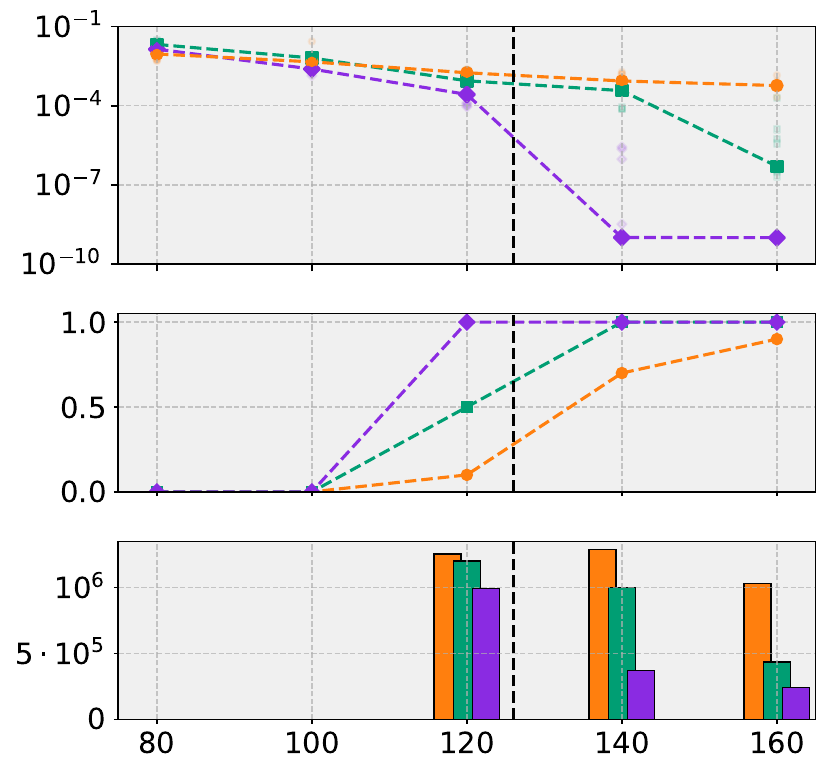}
        % Panel title
        \put(50,94){\makebox(0,0){\textbf{\small (c) $\mathrm{CH}$}}}
    
        % x-axis label
        \put(55,-1){\makebox(0,0){\small $\#$ optimization parameters}}
    
        % Legend specific to CH
        \put(20,60){%
          \begin{tikzpicture}
            \node[
              fill=white,
              fill opacity=0.8,
              text opacity=1,
              draw=black,
              rounded corners=2pt,
              inner sep=1pt
            ]{%
              \begin{tikzpicture}[x=1em,y=0.7em] % compact
                % dCRAB (Fourier)
                \draw[orange, thick, dashed] (0,3) -- (1.8,3);
                \draw[orange, thick,
                      mark=*,
                      mark size=1.5pt,
                      mark options={solid, fill=orange, rotate=180}]
                  plot coordinates {(0.9,3)};
                \node[right=0.35em] at (1.8,3) {\tiny dCRAB};
    
                % RALLY_T
                \draw[myPurple, thick, dashed] (0,2) -- (1.8,2);
                \draw[myPurple, thick,
                      mark=square*,
                      mark size=1.5pt,
                      mark options={solid, rotate=45}]
                  plot coordinates {(0.9,2)};
                \node[right=0.35em] at (1.8,2) {\tiny RALLY$_\text{T}$};
    
                % RALLY_A
                \draw[myGreen, thick, dashed] (0,1) -- (1.8,1);
                \draw[myGreen, thick,
                      mark=square*,
                      mark size=1.5pt,
                      mark options={solid}]
                  plot coordinates {(0.9,1)};
                \node[right=0.35em] at (1.8,1) {\tiny RALLY$_\text{A}$};

                \draw[black, thick, dashed] (0,0) -- (1.8,0);
                \node[right=0.35em] at (1.8,0) {\tiny $2N-2$};
              \end{tikzpicture}%
            };
            
                % 2N-2: black line only
            %     \draw[black, thick, dashed] (0,0) -- (1.8,0);
            %     \node[right=0.35em] at (1.8,0) {\tiny $2N-2$};
            %   \end{tikzpicture}%
            % };
          \end{tikzpicture}%
        }
      \end{overpic}
    \end{minipage}
\caption{
  Performance of the RALLY methods and dCRAB for the ground-state preparation for the three molecules \(\mathrm{H}_2\)~(a), \(\mathrm{NO}_3\)~(b), and \(\mathrm{CH}\)~(c). The RALLY methods (with \(N_\text{P}=5\)) achieve the target precision of $J_\text{e}\leq10^{-9}$\,H for all three molecules close to the information-theoretic bound (dashed vertical line) on the minimum number of parameters (see Appendix~\ref{app:cond}). Except for \(\mathrm{H}_2\), the RALLY methods converge to the target accuracy faster than dCRAB and require fewer figure-of-merit evaluations. Each panel shows as a function of the number of optimization parameters (\(N_L\) for RALLY): median ground-state energy accuracy $J_\text{e}$~\eqref{eq:gs_energy} (top), success probability for achieving $J_\text{e}\leq10^{-3}$\,H over 10 optimization runs (middle), and the median number of figure-of-merit evaluations required to reach $J_\text{e}\leq10^{-3}$\,H (bottom). Missing bars in the bottom panels correspond to cases when no successful optimizations were achieved. All methods use the same optimizer (adaptive Nelder–Mead) and identical stopping criteria. For \(\mathrm{NO}_3\), two dCRAB settings are shown: one with total time $T=500\,\mu$s similar to RALLY’s optimized schedule and one with a much longer $T=3000\,\mu$s illustrating the need for dCRAB to operate at longer times for this problem. For dCRAB, the Fourier basis was used.
  }
  \label{fig:VQS_dcrab_comp}
\end{figure*}

\subsubsection{Reducing the total pulse-sequences time duration with an additional penalty term} 
To reduce the total pulse-sequence time duration $\sum_{\ell=1}^{N_L}\tau_{\ell}$, RALLY$_\text{T}$ offers the possibility to add a penalty term $(\sum_{\ell=1}^{N_L}\tau_{\ell})^2$ to the figure of merit, which is to be minimized. For the unitary-synthesis problem, we hence obtain $J_\text{u}[u_k] + \lambda \left(\sum_{\ell=1}^{N_L}\tau_{\ell}\right)^2$ for the total figure of merit, where $\lambda$ is the weight of the additional time-reducing term. Thus, a single optimization run with a suitable weight $\lambda$ can efficiently yield a short pulse sequence and an accurate approximation at the same time.

For the unitary-synthesis problem, we obtain a short pulse sequence with a total time duration of about $55 \mu \text{s}$ that achieves a unitary infidelity $J_\text{u}\leq 10^{-3}$ using $\lambda=10^{-4}$. This time duration is below the bound provided by the standard quantum speed limit, which ignores the geometry of the drift and control Hamiltonians~\cite{qsl_unitary_2}. Hence, a more appropriate reference value for the globally-driven Rydberg platform may be the geometric speed limit~\cite{GSL1,qsl_spinchain}.

\subsection{Ground-state preparation for molecules} 
For the variational ground-state preparation, we use the energy-expectation value of the target Hamiltonian $\cal{H}_\text{T}$ as the figure of merit to be minimized
\begin{equation} \label{eq:gs_energy}
    J_\text{e}[u]\equiv\langle\psi_0|U^\dagger[u]{\mathcal{H}}_\text{T}U[u]|\psi_0\rangle,
\end{equation}
where $|\psi_0\rangle=\ket{0}^{\otimes n}$ is the initial state of the system. Similar problems are considered in variational quantum algorithms such as the variational quantum eigensolver and the variational quantum simulation~\cite{tilly_variational_2022,Kokail_2019,ctrl-VQE,montangero_qsl_2,qoc_vqa}.

Using the RALLY methods, we determine the ground states of the molecules H$_2$, NO$_3$ and CH in numerical simulations of a globally-driven Rydberg-atom platform~\eqref{eq:analog_sim}. The minimal meaningful choice of the spin orbitals for the three molecules H$_2$, NO$_3$ and CH results in an increasing Hilbert-space (or symmetry-sector) dimension of 9, 16 and 64, respectively, as detailed in Appendix~\ref{app:VQS_optimization}. The three molecules also differ in the von-Neumann entanglement $S$ of the corresponding ground states with the average single-qubit (spin-orbital) entanglement: $S\approx0.1$ for H$_2$, $S\approx0.8$ for NO$_3$ and $S\approx0.4$ for CH; see Appendix~\ref{app:entropy}.

The RALLY methods achieve the target precision of $10^{-9}$\,H while approaching the information-theoretic bound for all three molecules as shown in Fig.~\ref{fig:VQS_dcrab_comp}. The RALLY methods also yield more accurate results than dCRAB for the molecules, while requiring fewer figure-of-merit evaluations.  While the accuracy of the dCRAB algorithm gradually improves when increasing the number of optimization parameters, RALLY methods converge to the target precision once $N_\text{L}$ is above the information-theoretic minimum. Only for the H$_2$ molecule with the smallest symmetry-sector dimension and lowest ground-state entanglement, dCRAB outperforms the RALLY methods concerning the accuracy at smaller numbers of the optimization parameters and the success probability for achieving  $J_\text{e}\leq 10^{-3}$. For this comparison, identical optimization algorithms, optimization settings, and convergence criteria were used, as detailed in Appendix~\ref{app:VQS_optimization}. It is also noteworthy that the simulations using the RALLY$_\text{T}$ method terminated two to five times faster than those using the dCRAB and RALLY$_\text{A}$ algorithms.

For the $\mathrm{NO}_3$ molecule, the dCRAB algorithm fails to reach an accuracy of $10^{-3}\,\text{H}$ within a total pulse-sequence duration of $T = 500\,\mu\text{s}$, which is comparable to the durations obtained with RALLY$_\text{T}$ and equal to that used for RALLY$_\text{A}$, cf Fig.~\ref{fig:VQS_dcrab_comp}b. Accuracy below $10^{-3}\,\text{H}$ was reached for dCRAB when increasing the total control-field time to $T=3000\,\mu$s.

\subsection{State transfer in an Ising spin chain} \label{sec_ising}
Let us now consider an Ising spin chain with \(n\) \mbox{spins-1/2} described by the Hamiltonian~\eqref{eq:random_spin_H}. We consider a state-transfer problem, when the initial state \(\ket{\psi_0}=\ket{0}^{\otimes n}\) must evolve into the \(n\)-spin GHZ state $\ket{\mathrm{GHZ}_n}=\frac{1}{\sqrt{2}}\bigl(\ket{0}^{\otimes n}+\ket{1}^{\otimes n}\bigr)$. The figure of merit used is the state-transfer infidelity
\begin{equation}  \label{eqn_state-trans}
    J_\text{s}[u]\equiv1-\bigl|\bra{\mathrm{GHZ}_n}U[u]|\psi_{0}\rangle\bigr|^{2}.
\end{equation}

\begin{figure}[b]
  \centering
  \begin{overpic}[width=0.8\columnwidth]{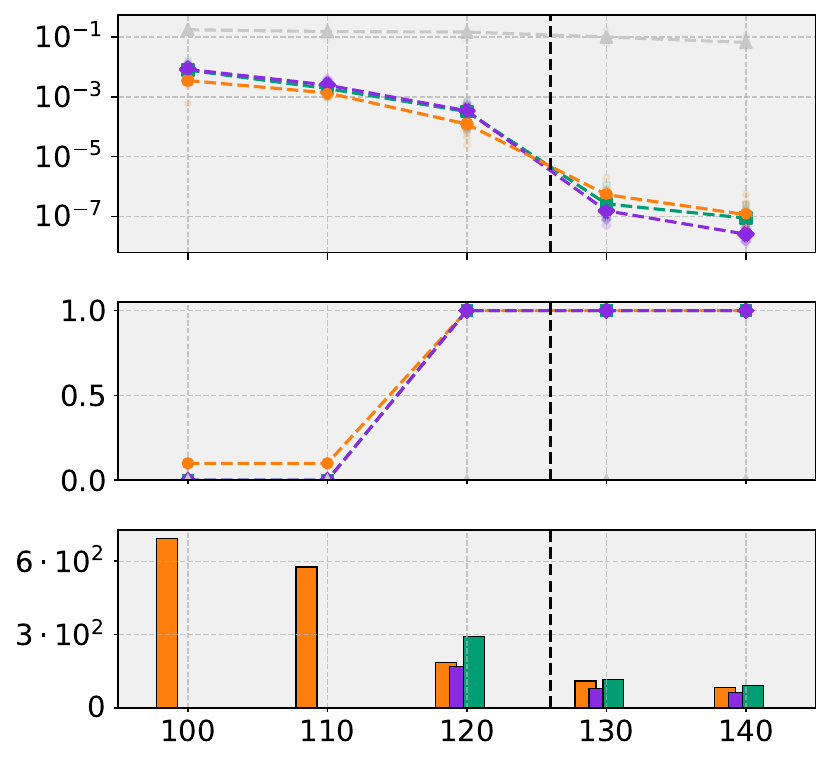}
    % Common x-axis label (centered at bottom)
    \put(55,-1){\makebox(0,0){\small $\#$ optimization parameters}}

    \put(-2.5,64){\rotatebox{90}{\small Infidelity $J_\text{s}$}}
    \put(-2.9,32){\rotatebox{90}{\small P$(J_\text{s}\leq 10^{-3})$}}
    \put(-2.5,2){\rotatebox{90}{\small $\#$ FoM evals}}

    \put(50,20){%
      \begin{tikzpicture}
        \node[
          fill=white,
          fill opacity=0.8,
          text opacity=1,
          draw=black,
          rounded corners=2pt,
          inner sep=2pt
        ] {
          \begin{tikzpicture}[x=1em,y=1em]
            % dCRAB Fourier: orange, line + triangle down
            \draw[orange, thick, dashed] (0,4) -- (1.8,4);
            \draw[orange, thick,
                  mark=triangle*,
                  mark size=1.5pt, 
                  mark options={solid, fill=orange, rotate=180}]
              plot coordinates {(0.9,4)};
            \node[right=0.6em] at (1.8,4) {\scriptsize GRAPE};

            % RALLY_A: green, line + square
            \draw[myGreen, thick, dashed] (0,3) -- (1.8,3);
            \draw[myGreen, thick,
                  mark=square*,
                  mark size=1.5pt, 
                  mark options={solid}]
              plot coordinates {(0.9,3)};
            \node[right=0.6em] at (1.8,3) {\scriptsize RALLY$_\text{A}$};
    
            % RALLY_T: purple, line + diamond
            \draw[myPurple, thick, dashed] (0,2) -- (1.8,2);
            \draw[myPurple, thick,
                  mark=square*,
                  mark size=1.5pt, 
                  mark options={solid, rotate=45}]
              plot coordinates {(0.9,2)};
            \node[right=0.6em] at (1.8,2) {\scriptsize RALLY$_\text{T}$ with \(N_\text{P}=5\)};

            % RALLY_T: purple, line + diamond
            \draw[lightgray, thick, dashed] (0,1) -- (1.8,1);
            \draw[lightgray, thick,
                  mark=triangle*,
                  mark size=1.5pt, 
                  mark options={solid, rotate=0}]
              plot coordinates {(0.9,1)};
            \node[right=0.6em] at (1.8,1) {\scriptsize RALLY$_\text{T}$ with \(N_\text{P}=1\)};
    
            % N^2 - 1: black line only
            \draw[black, thick, dashed] (0,0) -- (1.8,0);
            \node[right=0.6em] at (1.8,0) {\scriptsize $2N-2$};
          \end{tikzpicture}
        };
      \end{tikzpicture}%
    }

  \end{overpic}
    \caption{
    State transfer in the Ising spin chain of $n=6$ spins-1/2: RALLY methods (with \mbox{\(N_\text{P}=5\)}) and GRAPE exhibit similar results, whereas RALLY\(_\text{T}\) better reflects experimental constraints, namely bandwidth-limited pulses (cf. Sec.~\ref{sec_rally}) with control amplitudes restricted to two discrete values $\{+1,-1\}$. The individual parts of the figure show as a function of the number of optimization parameters ($N_\text{L}$ for RALLY): median state infidelity $J_\text{s}$~\eqref{eqn_state-trans} (top), success probability for achieving $J_\text{s}\leq 10^{-3}$ over 10 optimization runs (middle), and the median number of figure-of-merit evaluations required to achieve $J_\text{s}\leq 10^{-3}$ (bottom). Missing bars in the bottom panel correspond to cases when no successful optimizations were achieved. All methods use the same gradient-based optimizer (L-BFGS-B) and identical stopping criteria for the optimization.
    }
    \label{fig:grape_comp}

\end{figure}

\subsubsection{Performance of the RALLY methods} \label{subsub:RALLY_GRAPE}
To demonstrate the capabilities of RALLY$_\text{T}$, we (i) restrict the pulse amplitudes to two discrete values \mbox{$J(t)\in\{+1,-1\}$} and (ii) smoothly interpolate between these two discrete pulse amplitudes, cf. Fig.~\ref{fig:bandwidth}. The latter is to accommodate a finite bandwidth of the experimental setting. As shown in Fig.~\ref{fig:grape_comp}, RALLY$_\text{T}$ achieves high-fidelity state transfer on an $n=6$ Ising spin chain, where the sigmoid function used for the interpolation is detailed in Appendix~\ref{app:sigmoid}. Once the number of optimization parameters $N_\text{L}$ exceeds the information-theoretic minimum (see Appendix~\ref{app:cond}), RALLY$_\text{T}$ reaches an infidelity of $10^{-7}$ and lower. Enforcing the above hardware constraints has little effect on these results: RALLY$_\text{T}$ exhibits comparable accuracy when choosing the pulse amplitudes randomly from a continuous distribution and excluding the interpolation between pulses. GRAPE and RALLY$_\text{A}$ achieve similar infidelities for the state-transfer problem but without accommodating the constraints on the finite bandwidth and discrete control-amplitude values. In fact, the above hardware constraints are difficult to integrate into GRAPE or RALLY$_\text{A}$ without significant computational overhead. 

It is remarkable that GRAPE and RALLY$_\text{A}$ achieve similar infidelities because GRAPE is equivalent to RALLY$_\text{A}$ with $N_\text{P}=1$. This implies that, when optimizing the pulse amplitudes, small layer sizes $N_\text{P}$ are already sufficient to achieve very small infidelities for the given state-transfer problem. In contrast, RALLY$_\text{T}$ requires $N_\text{P}>1$ for achieving small infidelities as shown in Fig.~\ref{fig:grape_comp}. In general, the performance of optimal-control methods is problem-specific, but the introduction of random layers provides a means to achieve accurate approximations for a broad range of quantum-optimal-control problems by choosing a suitable layer size. 

Since analytic expressions for the gradient of the state-transfer infidelity in Eq.~\eqref{eqn_state-trans} can be incorporated in the RALLY methods and GRAPE (see Appendix~\ref{app:grad}), the above numerical simulations were performed with the gradient-based optimizer L-BFGS-B with the same settings for all methods. As generally expected from gradient-based optimizations, the number of figure-of-merit evaluations is orders of magnitude smaller than for derivative-free optimizations described above.

\subsubsection{Scaling of the optimization runtime with the system size}
Let us now compare the total runtime required for the pulse-sequence optimizations of the state-transfer problem for GRAPE and RALLY$_\text{T}$, where we also include any pre-processing routines.
\begin{figure}[t]
  \centering
  \begin{overpic}[width=0.8\columnwidth]{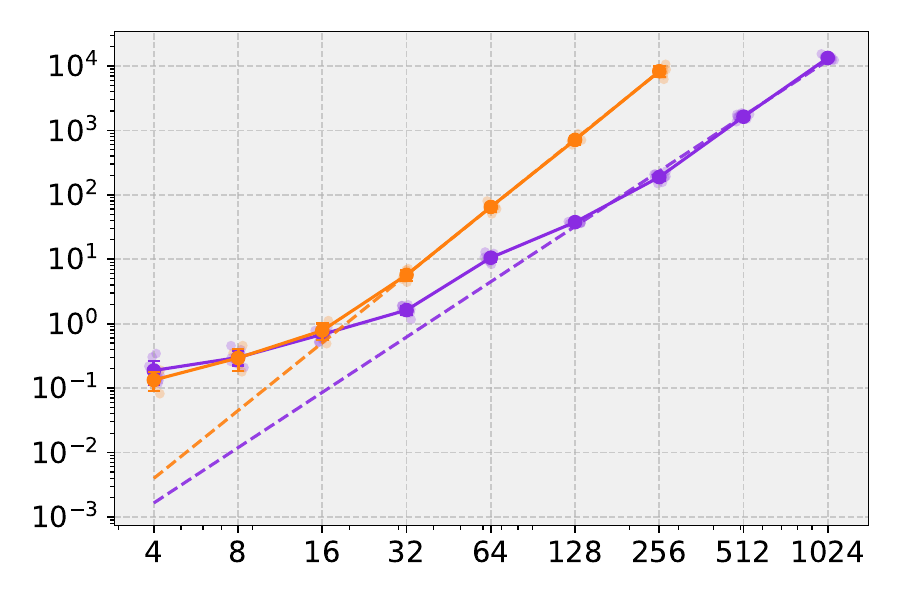}
    % Common x-axis label (centered at bottom)
    \put(55,-1){\makebox(0,0){\small Hilbert-space dimension $N$}}

    \put(-4,20){\rotatebox{90}{\small Total runtime\,[s]}}
    % \put(-2.5,4){\rotatebox{90}{\small $\#$ FoM evals}}

    \put(45,15){%
      \begin{tikzpicture}
        \node[
          fill=white,
          fill opacity=0.8,
          text opacity=1,
          draw=black,
          rounded corners=2pt,
          inner sep=2pt
        ] {
          \begin{tikzpicture}[x=1em,y=1em]
            % dCRAB Fourier: orange, line + triangle down
            \draw[orange, thick] (0,4) -- (1.8,4);
            \draw[orange, thick,
                  mark=*,
                  mark size=1.5pt, 
                  mark options={solid, fill=orange, rotate=180}]
              plot coordinates {(0.9,4)};
            \node[right=0.6em] at (1.8,4) {\scriptsize GRAPE $\mathcal{O}(N^{2.85})$};
    
            % RALLY_T: purple, line + diamond
            \draw[myPurple, thick] (0,3) -- (1.8,3);
            \draw[myPurple, thick,
                  mark=*,
                  mark size=1.5pt, 
                  mark options={solid, rotate=45}]
              plot coordinates {(0.9,3)};
            \node[right=0.6em] at (1.8,3) {\scriptsize RALLY$_\text{T}$ $\mathcal{O}(N^{3.52})$};
    
          \end{tikzpicture}
        };
      \end{tikzpicture}%
    }

  \end{overpic}
  \caption{
    Runtime scaling to achieve a state-transfer infidelity $J_\text{s}\leq10^{-3}$ for different Hilbert-space dimensions \(N = 2^n\) for the Ising spin chain: RALLY$_\text{T}$ (violet) and GRAPE (orange). Fitting a linear function to the last four points in the double-logarithmic plot yields for the power-law functions the exponents \(2.85 \pm 0.13\) for RALLY$_\text{T}$ and \(3.52 \pm 0.02\) for GRAPE, indicating a better scaling for RALLY$_\text{T}$. Within the maximal runtime of \(2\times10^4\)\,s, RALLY$_\text{T}$ is able to handle systems with two spins more than GRAPE. Each point is an average over 10 independent calculations, where the same random coefficient sets \(h_{ix},h_{iz}\in[0.5,1]\) were used for GRAPE and RALLY$_\text{T}$.
  }
  \label{fig:time_grape_comp}

\end{figure}
Results for the runtime scaling shown in Fig.~\ref{fig:time_grape_comp} indicate that, within a maximal runtime of \(2\cdot10^{4}\,\mathrm{s}\), RALLY$_\text{T}$ can handle systems with two spins more than GRAPE. Fitting the results in a double-logarithmic plot by a linear function yields for the runtime scaling the exponents \(2.85 \pm 0.13\) for RALLY$_\text{T}$ and \(3.53 \pm 0.02\) for GRAPE. On the basis of theoretical estimates, we expect the scaling to be $\mathcal{O}(N^{3,5})$ for RALLY$_\text{T}$ and $\mathcal{O}(N^{4,5})$ for GRAPE, see Appendix~\ref{app_scaling}. We attribute this discrepancy to the fact that the theoretical estimates correspond to worst-case estimates and that the scaling of the individual numerical operations has not yet reached its asymptotic regime for the system sizes considered here.

Considering system sizes beyond $n=10$ spins is feasible using exact diagonalization of the Hamiltonian but the RALLY methods can also be combined with other numerical techniques for the time evolution of quantum systems, such as the Suzuki-Trotter expansion~\cite{suzuki} or tensor-network techniques~\cite{SCHOLLWOCK, CRAB}.

\section{Conclusion} \label{sec_conclude}
We introduced a family of efficient parametrized pulse sequences based on random layers for quantum optimal control. The efficacy of the random-layers (RALLY) methods is based on two construction principles: \mbox{(1) \textit{Random unitaries}} - The ensemble of reachable unitaries converges exponentially fast to the uniform Haar-random ensemble with the total number of pulses $N_\text{P} N_\text{L}$, such that the unitary space can be explored efficiently. (2) \textit{Layers} - Grouping pulses into layers, with one optimization parameter per layer, reduces the number of optimization parameters and simplifies the optimization. Having these properties, the RALLY pulse sequences solve the long-standing problem in quantum optimal control of finding an optimal pulse structure that leads to an efficient exploration of the unitary space with a minimal number of optimization parameters. 

We focused on two outstanding RALLY methods: RALLY$_\text{T}$ optimizes the time durations of the layers keeping the (potentially discrete) random amplitudes fixed, while RALLY$_\text{A}$ optimizes a factor that scales the randomly chosen amplitudes keeping the pulse durations fixed. We found in numerical simulations that the RALLY methods approach an information-theoretic lower bound on the number of optimization parameters~\cite{Lloyd_2014}. In the gradient-free setting important for closed-loop optimizations, RALLY methods are orders of magnitude more accurate, while also requiring fewer figure-of-merit evaluations when compared to existing algorithms. Furthermore, RALLY$_\text{T}$ can be simulated efficiently, while also accounting for limited-bandwidth constraints. By incorporating an explicit penalty on the total time, RALLY$_\text{T}$ provides an efficient route for finding short pulses in large systems. Both RALLY methods are suitable for both closed-loop and open-loop optimizations. 

The RALLY methods could facilitate designing efficient quantum‑inspired classical optimal‑control methods. The application of the random-layer construction concepts to pulse sequences in other bases, such as the Fourier basis, is a further interesting research question. Beyond quantum optimal control, the random-layers structure can be directly embedded into parameterized circuits for variational quantum algorithms and quantum machine learning. For variational quantum algorithms, the RALLY concept can be used for enhanced ansätze, for example for the variational quantum eigensolver~\cite{qoc_vqa}. For quantum machine learning models like quantum neural networks, RALLY can provide more expressive feature maps or trainable models that avoid over-parameterization. The possibility of finding short pulses with RALLY$_\text{T}$ can facilitate the study of quantum speed limits.

The source code is published in a GitHub repository~\cite{code}.

\begin{acknowledgments}
M.D. and W.H. thank R. Zeier and V. V. Dobrovitski for very helpful discussions. M.D. and W.H. acknowledge financial support from the project HPCQS. HPCQS has received funding from the European High-Performance Computing Joint Undertaking (JU) under grant agreement No 101018180. The JU receives support from the European Union’s Horizon 2020 research and innovation programme and Germany, France, Italy, Ireland, Austria and Spain in equal parts. The author is solely responsible for its content, it does not represent the opinion of the EuropeanHPC Joint Undertaking and the EuroHPC JU is not responsible for any use that may be made of the information it contains. M.K. acknowledges funding from the European Union’s Horizon Europe research and innovation program under grant agreement No 101135699.
\end{acknowledgments}

\appendix

\section{Necessary conditions for solving quantum-optimal-control problems} \label{app:cond}

Finding accurate solutions to a quantum control problem requires meeting the following complementary requirements.

\subsection{Reachability and controllability}\label{app_reachability}
Reachability concerns the question of whether, given a set of control fields, a specific target unitary transformation is reachable in principle. A stronger requirement going beyond reachability is full controllability, which corresponds to the ability to generate any  unitary transformation. The common test for controllability is to investigate whether the dynamical Lie algebra spans the traceless skew-Hermitian algebra $\mathfrak{su}(N)$~\cite{controllability1}. It is generally more difficult to establish reachability than full controllability~\cite{gs-reachability,burgarth}.

Let us discuss the problem of the reachability of a target state $|\psi_\text{T}\rangle$ when starting from an initial state $|\psi_0\rangle$. System's dynamical Lie algebra (DLA) $\mathfrak{g}$ is defined for the control Hamiltonian ${\cal H}_c$ and the drift Hamiltonian ${\cal H}_0$ by~\cite{controllability1}
\begin{align}
\mathfrak{g} &\equiv \mathrm{Lie}(\{{i \cal H}_0,  {i \cal H}_c\})\\
&= \mathrm{span}\left\{i {\cal H}_k , [{i\cal H}_k, {i \cal H}_l], [{i \cal H}_k, [{i \cal H}_l, {i \cal H}_m]], \ldots \right\},
\end{align}
where $\mathrm{Lie}(\cdot)$ indicates the Lie closure, $k, l, m \in \{0,c\}$, and the span is taken over all possible nested commutators. If the DLA spans the entire skew-Hermitian Lie algebra $\mathfrak{u}(N)$ (or the traceless skew-Hermitian algebra $\mathfrak{su}(N)$) for an $N$-dimensional Hilbert space, the exponential map generating the corresponding dynamical Lie group yields the full unitary group $\mathbf{U}(N)$ (or the special unitary group $\mathbf{SU}(N)$). This would imply full controllability of the system~\cite{controllability1,koch_contr}.

In the presence of symmetries that induce invariant subspaces on the Hilbert space, often referred to as symmetry sectors, the total Hamiltonian can be block-diagonalized. The DLA can in this case be projected onto these sectors to verify their individual controllability. Crucially, if the initial state $|\psi_0\rangle$ and the target state $|\psi_{\text{T}}\rangle$ completely reside within the same symmetry sector and the subspace is controllable, the reachability of the target state from the initial state is guaranteed~\cite{gs-reachability}.

\subsection{Quantum speed limits}
The minimal evolution time required to approximate a target unitary transformation is generally bounded from below by constraints from the Hamiltonian and the initial conditions~\cite{montangero_qsl,montangero_qsl_2,wilhelm_qsl,caneva_qsl}. Such lower bounds are formalized by different quantum speed limits depending on the problem~\cite{MT_bound1,MT_bound2, MT_bound3,qsl,qsl_unitary_2}. For example, for arbitrary state transfer in a spin chain with $n$ spins, access to $n$ local control Hamiltonians yields a minimal evolution time that scales linearly with $n$~\cite{qsl_spinchain}. To incorporate the geometry induced by the drift and control Hamiltonians, the geodesic length defined by the quantum geometric tensor can be considered~\cite{GSL1,qsl_unitary_1}.

\subsection{Information–theoretic constraint on the number of optimization parameters}
The minimal number of real optimization parameters for an arbitrarily accurate approximation corresponds to the number of real degrees of freedom required for the description of the optimization problem~\cite{Control_landscape_Rabitz,Lloyd_2014}. For example, an arbitrary quantum state in a $N$-dimensional Hilbert space is described by $2N-2$ real parameters, where the normalization condition and the arbitrariness of the global phase result in the reduction by 2. Hence, at least $2N-2$ real optimization parameters for an accurate approximation of any target state are required~\cite{Control_landscape_Rabitz,Params_manybody,dcrab}. Similarly, when approximating an arbitrary unitary from the special unitary group $\mathbf{SU}(N)$, $N^{2}-1$ optimization parameters are required. These constraints provide a useful estimate for the required number of optimization parameters. Over-parametrization is usually considered relative to these constraints.

\section{dCRAB and GRAPE algorithms} \label{app:dCRAB_GRAPE}
Quantum-optimal-control algorithms can be gradient-free and gradient-based~\cite{qoc_review_2015,koch-review}. Gradient-free methods are simple to deploy and useful for closed-loop optimization because they rely only on figure-of-merit evaluations. However, they typically converge slower than gradient-driven approaches, especially in high-dimensional control spaces. An example is the dCRAB algorithm combined with a Nelder–Mead optimizer~\cite{dcrab, dCRAB_bases,Muller_2022,nelder, adaptie_Nelder}. Gradient-based techniques converge faster and often scale better for larger systems but obtaining reliable gradients and maintaining differentiability make their direct use in closed-loop calibration difficult. The GRAPE algorithm is a canonical example~\cite{GRAPE,GRAPE_feedback,goat}. We detail the dCRAB and GRAPE algorithms below and use them for benchmarking the RALLY methods.

\paragraph{dCRAB:} The dressed chopped random basis algorithm~\cite{dcrab} is based on an expansion of the control field \(u(t)=\sum_i c_i\,\phi_i(t)\) in a truncated basis (e.g. the Fourier basis)~\cite{dCRAB_bases}, where the coefficients \(c_i\) are the optimization parameters. One key challenge is the emergence of local traps: spurious local minima induced by truncating the infinite-dimensional space to a finite basis~\cite{dcrab}. To avoid false traps, the dCRAB algorithm iteratively introduces new random basis functions \(\{\phi^{(j)}_i(t)\}\) in so-called super-iterations and reuses the optimal solution \(\tilde{u}^{(j-1)}(t)\) from the optimization in the previous super-iteration
\begin{equation} \label{eq:dcrab_pulses}
u^{(j)}(t)=c^{(j)}_{0}\,\tilde{u}^{(j-1)}(t)
            +\sum_i c^{(j)}_{i}\,\phi^{(j)}_{i}(t),
\end{equation}
where only the new coefficients \(c^{(j)}_{i}\) are optimized~\cite{Muller_2022}. The dCRAB algorithm can easily incorporate experimental constraints and it often saturates the information-theoretic lower bounds~\cite{CRAB, sensing_neu}.

\paragraph{GRAPE:} In the gradient-ascent pulse engineering algorithm~\cite{GRAPE}, the total evolution time \(T\) is divided into \(M\) equal time windows of duration \(\Delta t = T/M\). The control field \(u[j]\) is chosen to be piecewise constant for each time window \(j=1,\dots,M\). The Hamiltonian in time window \(j\) reads \({\cal H}_j \equiv {\cal H}_0 + u[j]\,{\cal H}_c\) with the corresponding propagator \(U_j \equiv \exp\!\left(-i\,\Delta t\, {\cal H}_j\right)\) and total time-evolution operator \(U(T) = U_M U_{M-1}\cdots U_1\). Closed-form expressions for the derivatives can be derived with respect to the optimization parameters \(\partial_{u[j]} U(T)\) (for more details, see Appendix~\ref{app:grad}). These analytic gradients enable efficient use of first- and quasi-second-order optimizers (for example, %conjugate gradient, 
L-BFGS-B%~\cite{CG,LBFGSB}) 
~\cite{LBFGSB}) and facilitate the consideration of constraints on the controls.

\section{Proof of convergence of the RALLY propagator $U_{\mathrm R}$}
\label{app:t-unitary convergence}

\newcommand{\U}{\mathrm{U}}

The goal of this section is to prove the theorem stated in Sec.~\ref{sec_rally} on the convergence properties of the RALLY propagator $U_{\mathrm R}$. In the following, we distinguish between two cases:
\begin{enumerate}
  \item[a.] control amplitudes drawn from a \textit{continuous} probability
        distribution, and
  \item[b.] control amplitudes drawn from a \textit{discrete} probability
        distribution (e.g., a finite set of values).
\end{enumerate}
In the continuous case (a), we show below the exponential convergence of the ensemble of unitaries reachable by $U_{\mathrm R}$ to the Haar measure with $N_\text{P} N_\text{L}$~\cite{Haar_convergence,Random_pulses_Haar}. In the discrete case (b), we use a weaker notion of convergence: we limit the proof of convergence to polynomial test functions in the matrix entries (and their complex conjugates) of $U_{\mathrm R}$. In this setting, the results of Ref.~\cite{Haar_convergence} still imply exponential convergence in $N_\text{P} N_\text{L}$ of the expectation values of all such polynomials. This is sufficient for our practical purposes, since figures of merit in quantum optimal control are usually polynomial functions of $U_{\mathrm R}$.

Consider the general form of the RALLY propagator $U_{\mathrm R}=\prod_{\ell=1}^{N_\text{L}} U_{\ell}$,
where each $U_\ell$ is a layer composed of $N_\text{P}$ pulses $U_\ell = \prod_{p=1}^{N_\text{P}} U_{\ell,p}$. For the proof, we focus on RALLY$_T$ in the following. However, the proof is readily extended to RALLY$_A$. We divide the proof into three steps: first, we establish key properties of a single layer; second, we lift these properties to the full RALLY propagator; and finally, we analyze convergence to the Haar measure in the continuous case (a) and the discrete case (b).

\begingroup
\setcounter{secnumdepth}{4}
\renewcommand\theparagraph{\arabic{paragraph}}

\setcounter{paragraph}{0}

\paragraph{Single-layer propriety:} 

We treat the layer durations $\{\tau_\ell\}$ and pulse amplitudes $\{u^{(\ell,p)}\}$ as random variables that induce a probability measure $\nu_\ell$ on the unitary group $G=\U(d)$ for the $\ell$-th layer via the map
\begin{equation}
(\tau_\ell, \{u^{(\ell,p)}\}_p) \;\longmapsto\; U_{\ell}.
\end{equation}
Pulses within a layer are not independent because they all share the same random layer duration $\tau_\ell$. The propagator for each pulse can be written as
\begin{equation}
    U_{\ell,p}=
\exp\biggl[
  -i\,\frac{1}{N_\text{P}}
  \Bigl(
    {\tau_\ell\cal H}_{0}
    + \tau_\ell \,u^{(\ell,p)}{\cal H}_{c}
  \Bigr)
\biggr],
\end{equation}
so the relevant random parameters entering the pulses are the products $\tau_\ell u^{(\ell,p)}$. Even if the amplitudes $\{u^{(\ell,p)}\}_p$ are sampled independently, the products $\{\tau_\ell\,u^{(\ell,p)}\}_p$ are statistically dependent because they all contain the same random factor $\tau_\ell$. To arrive at a situation in which the individual pulses are statistically independent, we consider conditional probabilities with the condition on the value of the layer duration. We define the measure $\mu(\tau_\ell)$ on $G$, which is the distribution of single-pulse propagators $U_{\ell,p}$, when sampling only over the random amplitudes $u^{(\ell,p)}$ with the duration held fixed at $\tau_\ell$. Conditional on $\tau_\ell$, the pulses $\{U_{\ell,p}\}_{p=1}^{N_\text{P}}$ are i.i.d.\ with distribution $\mu(\tau_\ell)$.

We further define the conditional measure of the $\ell$-th layer, $\nu(\tau_\ell)$, as the distribution of $U_\ell(\tau_\ell)$ at fixed $\tau_\ell$. Since a layer is the product of $N_\text{P}$ pulses, we have
\begin{equation}
\label{eq:layer_measure}
  \nu(\tau_\ell)
  \;=\;
  \mu(\tau_\ell)^{*N_\text{P}},
\end{equation}
where $*$ denotes convolution on $G$ (see Ref.~\cite{Barut_irrep}).

In analogy to Ref.~\cite{Haar_convergence}, we work with the Fourier coefficients of these measures. Let $\widehat{G}$ denote the set of inequivalent irreducible unitary representations (irrep) of $G$, and, for $s\in\widehat{G}$, let
\begin{equation}
D^{(s)} : G \to \U(d_s)
\end{equation}
denote a representative of the $s$-th irrep with dimension $d_s$. The (matrix-valued) Fourier coefficient of a probability measure $f$ is defined by
\begin{equation}
\label{eq:F_coeffs}
  \hat f_s
  \;:=\;
  d_s \int_G \overline{D^{(s)}}(U)\,f(U) dU \;\in \mathbb{C}^{d_s\times d_s},
\end{equation}
where $dU$ denotes the Haar measure and $\overline{D^{(s)}}$ denotes the element-wise complex conjugation of $D^{(s)}$. The hat indicates the Fourier coefficients in the following. For convolution powers, we have the standard identity (see Eqs.~(8)–(9) of Ref.~\cite{Haar_convergence})
\begin{equation}
  \widehat{\mu^{*m}}_s
  \;=\;
  d_s\Bigl(\frac{\hat\mu_s}{d_s}\Bigr)^m.
  \label{eq:conv-power-fourier}
\end{equation}
Applying this to the conditional layer measure of Eq.~\eqref{eq:layer_measure} yields
\begin{equation}
  \widehat{\nu(\tau_\ell)}{}_s
  \;=\;
  \widehat{\mu(\tau_\ell)^{*N_\text{P}}}{}_s
  \;=\;
  d_s\Bigl(\frac{\widehat{\mu(\tau_\ell)}{}_s}{d_s}\Bigr)^{N_\text{P}},
  \label{eq:nu-tau-fourier}
\end{equation}
where
\begin{equation}
  \widehat{\mu(\tau_\ell)}{}_s
  \;:=\;
  d_s\int_G \overline{D^{(s)}}(U)\,\mu(\tau_\ell)(U) dU
\end{equation}
are the Fourier coefficients of the single-pulse distribution at duration $\tau_\ell$.

We now consider the unconditional probability measure of a layer obtained by averaging over the random layer duration $\tau_\ell$. For notational simplicity, we drop the layer index and write $\tau \equiv \tau_\ell$. Let $\rho_\tau$ denote the probability distribution of $\tau$, and define the resulting measure of a single layer as
\begin{equation}
  \nu
  \;:=\;
  \mathbb{E}_{\tau}\!\left[\nu(\tau)\right]
  \;=\;
  \int \nu(\tau)\,\rho_\tau(\tau)\,d\tau.
\end{equation}
Taking Fourier coefficients and using linearity of the integral, we obtain
\begin{equation}
  \hat\nu_s
  \;=\;
  \int \widehat{\nu(\tau)}{}_s\,\rho_\tau(\tau)d\tau
  \;=\;
  \mathbb{E}_{\tau}\!\left[
    d_s\Bigl(\frac{\widehat{\mu(\tau)}{}_s}{d_s}\Bigr)^{N_\text{P}}
  \right].
  \label{eq:nu-fourier}
\end{equation}

Now, we bound the operator norm of $\hat\nu_s$. Using the Jensen inequality and the submultiplicativity $\|AB\|\le\|A\|\,\|B\|$, we find
\begin{align}
  \frac{\|\hat\nu_s\|}{d_s}
  &\le
  \mathbb{E}_{\tau}\!\left[
    \left\|
      \Bigl(\frac{\widehat{\mu(\tau)}{}_s}{d_s}\Bigr)^{N_\text{P}}
    \right\|
  \right]
  \nonumber\\
  &\le
  \mathbb{E}_{\tau}\!\left[
    \Bigl(\tfrac{\|\widehat{\mu(\tau)}{}_s\|}{d_s}\Bigr)^{N_\text{P}}
  \right].
  \label{eq:nu-norm-bound}
\end{align}

By the main lemma of Ref.~\cite{Haar_convergence} (see the discussion leading to their Eq.~(10)), if the support of a probability measure $f$ generates a dense subgroup of $G$, then for each nontrivial irrep $s\neq 0$ we have
\begin{equation}
  \|\hat f_s\| < d_s.
  \label{eq:lemma-EL}
\end{equation}
We now assume full controllability for the RALLY propagator: there exists $0<\epsilon\le\tau_{\max}<\infty$ such that, for every $\tau\in[\epsilon,\tau_{\max}]$, the support of $\mu(\tau)$ generates a dense subgroup of $G=\U(d)$. The restriction to strictly positive durations excludes the trivial case in which all pulses with $\tau = 0$ reduce to the identity. The upper bound $\tau_{\max}$ is also physically motivated, since pulse durations must be finite in any realistic setting. Under this assumption, Eq.~\eqref{eq:lemma-EL} implies that for every $\tau\in[\epsilon,\tau_{\max}]$ and every nontrivial irrep $s\neq 0$,
\begin{equation}
  \|\widehat{\mu(\tau)}{}_s\| < d_s.
\end{equation}
Moreover, the map $\tau\mapsto\|\widehat{\mu(\tau)}{}_s\|$ is continuous on the compact interval $[\epsilon,\tau_{\max}]$, since it is the composition of continuous maps. Therefore, for each nontrivial $s$ we can define
\begin{equation}
  \alpha_s
  \;:=\;
  \max_{\tau\in[\epsilon,\tau_{\max}]}\|\widehat{\mu(\tau)}{}_s\|
  \;<\;
  d_s,
\end{equation}
which is independent of $\tau$. Inserting this into Eq.~\eqref{eq:nu-norm-bound} and defining \(\beta_s:=\alpha_s/d_s\) results for every nontrivial irrep $s$ in
\begin{equation}
  \frac{\|\hat\nu_s\|}{d_s}
  \;\le\;
  \beta_s^{N_\text{P}},
  %\;<\; 1.
  \label{eq:nu-gap}
\end{equation}
with $\beta_s<1$.

\paragraph{From single layers to the full RALLY propagator:}
Now consider the full RALLY propagator $U_{\mathrm R}$, which is a product of $N_\text{L}$ i.i.d.\ layers, each distributed according to $\nu$. The induced measure on $G$ is the $N_\text{L}$-fold convolution $\nu^{*N_\text{L}}$. Applying Eq.~\eqref{eq:conv-power-fourier} to this convolution gives
\begin{equation}
  \widehat{\nu^{*N_\text{L}}}{}_s
  \;=\;
  d_s\Bigl(\frac{\hat\nu_s}{d_s}\Bigr)^{N_\text{L}}.
\end{equation}
Therefore, for each nontrivial irrep $s\neq 0$ we obtain
\begin{align}
  \frac{\|\widehat{\nu^{*N_\text{L}}}{}_s\|}{d_s}
  &\le
  \left\|
    \frac{\hat\nu_s}{d_s}
  \right\|^{N_\text{L}}
  \nonumber\\
  &\le
  \beta_s^{N_\text{P}N_\text{L}}.
  \label{eq:full-gap}
\end{align}
Thus, all nontrivial Fourier coefficients of the distribution of $U_{\mathrm R}$ decay exponentially in the total number of pulses $N_\text{P}N_\text{L}$, with contraction factor $\beta_s$. 

\paragraph{Convergence to the Haar measure:}
In case~(a), where the control amplitudes are drawn from a continuous distribution, each $\mu(\tau)$ is a continuous probability measure on $G$ with support generating the full group. The results of Ref.~\cite{Haar_convergence} then imply that the sequence of measures $\nu^{*N_\text{L}}$ converges uniformly (in the $L^\infty$ norm) to the Haar measure on $G$, with an exponential rate governed by the bounds in Eq.~\eqref{eq:full-gap}. \hfill $\square$

In case~(b), where the control amplitudes are drawn from a discrete distribution (so that each $\mu(\tau)$ may be a finite sum of delta functions), the same Fourier analysis applies. Uniform convergence of densities is no longer available, but we still obtain exponential convergence for all polynomial test functions, as we now show explicitly. For any polynomial $\Phi:G\to\mathbb C$ define the distance
\begin{equation}
\label{eq:Delta(phi)}
  \Delta(\Phi)
  :=\int_G\Phi(U)\,\nu^{*N_{\mathrm L}}(U)dU-\int_G\Phi(U)\,dU,
\end{equation}
so that $\Delta(\Phi)\to 0$ signifies exact convergence in this weak topology. Following the Peter-Weyl theorem~\cite{Barut_irrep}, since $\Phi$ is a polynomial, there is a finite set $S\subset\widehat G$ and coefficients
$c^{(s)}_{jk}\in\mathbb C$ with
\begin{equation}\label{eq:Phi-PW}
  \Phi(U)=\sum_{s\in S}\sum_{j,k=1}^{d_s} c^{(s)}_{jk}\,D^{(s)}_{jk}(U).
\end{equation}
This expansion expresses $\Phi$ as a finite linear combination of matrix coefficients of irreducible representations, which play the role of Fourier modes on the group.

We can rewrite the first integral of Eq.~\eqref{eq:Delta(phi)} by using the finite Peter-Weyl expansion and the Fourier coefficients of \(\nu^{*N_{\mathrm L}}\),
\begin{equation}\label{eq:fourier-entry}
  \int_G D^{(s)}_{jk}(U)\,\nu^{*N_{\mathrm L}}(U)dU
  =\frac{1}{d_s}\,\overline{\bigl(\widehat{\nu^{*N_{\mathrm L}}}_s\bigr)_{jk}}.
\end{equation}
For the second integral of Eq.~\eqref{eq:Delta(phi)}, $\int_G D^{(s)}_{jk}(U)\,dU=0$ for every nontrivial $s$ because of orthogonality. Moreover, the term with $s=0$ is equal for both integrals, and then 
\begin{equation}\label{eq:Delta-expansion}
  \Delta(\Phi)
  =\sum_{\substack{s\in S\\ s\neq 0}}\sum_{j,k=1}^{d_s}
  c^{(s)}_{jk}\,\frac{1}{d_s}\,\overline{\bigl(\widehat{\nu^{*N_{\mathrm L}}}_s\bigr)_{jk}}.
\end{equation}
% Absolute-value bound
Taking absolute values and using $|(M)_{jk}|\le\|M\|$ gives
\begin{equation}\label{eq:Delta-bound1}
  |\Delta(\Phi)|
  \le\sum_{\substack{s\in S\\ s\neq 0}}\sum_{j,k=1}^{d_s}
  |c^{(s)}_{jk}|\frac{\bigl\|\widehat{\nu^{*N_{\mathrm L}}}_s\bigr\|}{d_s}.
\end{equation}
% Spectral-gap -> exponential bound
Using inequality of Eq.~\eqref{eq:full-gap}, we obtain
\begin{equation}\label{eq:Delta-bound2}
  |\Delta(\Phi)|
  \le\sum_{\substack{s\in S\\ s\neq 0}}\sum_{j,k=1}^{d_s}
  |c^{(s)}_{jk}|\,\beta_s^{\,N_{\mathrm P}N_{\mathrm L}}.
\end{equation}
Since the index set $S$ is finite, we may introduce the constants
\begin{equation}
  C_\Phi
  := \sum_{\substack{s\in S\\ s\neq 0}}\sum_{j,k=1}^{d_s}\bigl|c^{(s)}_{jk}\bigr|
  \qquad\text{and}\qquad
  \beta
  := \max_{\substack{s\in S\\ s\neq 0}}\beta_s < 1.
\end{equation}
With this notation, we obtain the compact bound
\begin{equation}
\label{eq:final-bound}
  |\Delta(\Phi)|
  \le C_\Phi\,\beta^{\,N_{\mathrm P}N_{\mathrm L}}.
\end{equation}
Thus, the discrepancy $|\Delta(\Phi)|$ decays exponentially in $N_{\mathrm P}N_{\mathrm L}$ with respect to the weak topology induced by polynomial test functions, cf. Eq.~\eqref{eq:Delta(phi)}. In the language of unitary designs, this implies exponential convergence of the RALLY ensemble to an approximate unitary $t$-design. \hfill $\square$

In summary, under the assumption of controllability, the distribution of the RALLY propagator $U_{\mathrm R}$ converges exponentially fast to the Haar measure on $\U(d)$: uniformly in the continuous-amplitude case, and in the sense of all degree-$t$ polynomial test functions (i.e., as an approximate unitary $t$-design) in the discrete-amplitude case.

\endgroup

\section{Numerical evidence of convergence to Haar ensemble} \label{App:Harr_conv}
In this section, we present a numerical criterion to provide evidence of convergence to the Haar measure on $\operatorname{U}(d)$. The criterion involves convergence to $t$-unitary designs via a scalar inequality (Theorem~5.4 of Ref.~\cite{Scott_2008}), originally proved for finite weighted sets and here extended to general probability measure. It is convenient for numerical calculations because it depends only on pairwise overlaps of unitaries and can be estimated directly from samples.

Let us define the positive constant
\begin{equation}\label{gamma-def}
\gamma(t,d):=\int_{\operatorname{U}(d)}\!dU\;\big|\! \operatorname{tr}[U] \big|^{2t}\,.
\end{equation}
If $t\leq d$, this integral is equal to $t!$ (see Ref.~\cite{Scott_2008}).
\begin{theorem}[Finite uniform-weight $t$-design bound] \label{th:fu}
Let $X\subset\operatorname{U}(d)$ be a finite set with sample size $N_\text{x}:=|X|$, and consider $X$ with the uniform weights
\begin{equation}
w(x)=\frac{1}{N_\text{x}}\qquad(x\in X).
\end{equation}
Then for any integer $t\ge 1$,
\begin{equation}\label{finite-bound}
\frac{1}{N^{2}_\text{x}}\sum_{x,y\in X}\Big| \operatorname{tr}\big[U(x)^\dagger U(y)\big]\Big|^{2t}
\;\ge\; \gamma(t,d),
\end{equation}
with equality if and only if $X$ is a $t$-design.
\end{theorem}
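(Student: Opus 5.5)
The plan is to rewrite the left-hand side of Eq.~\eqref{finite-bound} as the squared Hilbert--Schmidt norm of a moment operator, and to compare it with the Haar average through the orthogonal projection onto the commutant. Let $\nu_X=\frac{1}{N_\text{x}}\sum_{x\in X}\delta_{U(x)}$. We define the $t$-fold twirl (frame operator)
\begin{equation}
M_X:=\frac{1}{N_\text{x}}\sum_{x\in X}U(x)^{\otimes t}\otimes\overline{U(x)}^{\otimes t},
\end{equation}
and analogously $M_{\mathrm{H}}:=\int_{\operatorname{U}(d)}dU\,U^{\otimes t}\otimes\overline{U}^{\otimes t}$ for the Haar measure. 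The identity $\operatorname{tr}[A^{\otimes t}\otimes\overline{A}^{\otimes t}]=|\operatorname{tr}A|^{2t}$, applied with $A=U(x)^\dagger U(y)$, gives
\begin{equation}
\frac{1}{N_\text{x}^2}\sum_{x,y\in X}\big|\operatorname{tr}[U(x)^\dagger U(y)]\big|^{2t}=\operatorname{tr}[M_X^\dagger M_X]=\|M_X\|_2^2 .
\end{equation}
The same computation, combined with the invariance of the Haar measure, yields $\|M_{\mathrm H}\|_2^2=\int dU\,|\operatorname{tr}U|^{2t}=\gamma(t,d)$.

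The key structural fact is that $M_{\mathrm H}$ is the orthogonal projector (in the Hilbert--Schmidt sense it is Hermitian and idempotent) onto the subspace of vectors fixed by all $V^{\otimes t}\otimes\overline V^{\otimes t}$, i.e. onto the trivial-isotypic component. Left and right invariance of the Haar measure then give
\begin{equation}
M_{\mathrm H}M_X=M_XM_{\mathrm H}=M_{\mathrm H}.
\end{equation}
For this we average $M_X$ against Haar on one side and use that each $U(x)^{\otimes t}\otimes\overline{U(x)}^{\otimes t}$ acts as the identity on the invariant subspace, since the weights sum to one.

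From these relations,
\begin{equation}
\operatorname{tr}\big[(M_X-M_{\mathrm H})^\dagger(M_X-M_{\mathrm H})\big]=\|M_X\|_2^2-2\operatorname{Re}\operatorname{tr}[M_X^\dagger M_{\mathrm H}]+\|M_{\mathrm H}\|_2^2 .
\end{equation}
Using $M_X^\dagger M_{\mathrm H}=(M_{\mathrm H}M_X)^\dagger=M_{\mathrm H}$ (recall $M_{\mathrm H}$ is Hermitian), the cross term equals $\operatorname{tr}M_{\mathrm H}=\|M_{\mathrm H}\|_2^2=\gamma(t,d)$. Hence
\begin{equation}
\|M_X\|_2^2-\gamma(t,d)=\|M_X-M_{\mathrm H}\|_2^2\ge 0,
\end{equation}
which is the inequality. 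Equality holds iff $M_X=M_{\mathrm H}$. This is exactly the statement that the average of $U^{\otimes t}\otimes\overline U^{\otimes t}$, equivalently of every balanced degree-$(t,t)$ monomial in the entries, agrees with Haar, i.e. that $X$ is a $t$-design.

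The main obstacle I expect is justifying $M_{\mathrm H}M_X=M_{\mathrm H}$ and the projector property cleanly. The plan there is to decompose $V\mapsto V^{\otimes t}\otimes\overline V^{\otimes t}$ into irreps via Peter--Weyl, as in Appendix~\ref{app:t-unitary convergence}, and note that Haar integration kills every nontrivial irrep block (orthogonality relations) and is the identity on trivial blocks. A secondary point is to be careful with the convention for the definition of a $t$-design, which here should be the balanced $(t,t)$ moment condition. Finally, since the argument never uses finiteness, I would note that it extends verbatim to any probability measure $\nu$, which is the extension the paper uses.
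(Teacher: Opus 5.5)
Your argument is correct and is essentially the proof behind the result the paper cites. The paper gives no argument of its own; it defers to Theorem~5.4 of Scott, whose proof is exactly the expansion $0\le\|M_X-M_{\mathrm{H}}\|_2^2=\frac{1}{N_\text{x}^2}\sum_{x,y\in X}\bigl|\operatorname{tr}[U(x)^\dagger U(y)]\bigr|^{2t}-\gamma(t,d)$ that you carry out, with equality iff $M_X=M_{\mathrm{H}}$.

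The step you flag as the main obstacle needs no Peter--Weyl decomposition. The relations $M_{\mathrm{H}}^2=M_{\mathrm{H}}=M_{\mathrm{H}}^\dagger$ and $M_{\mathrm{H}}M_X=M_XM_{\mathrm{H}}=M_{\mathrm{H}}$ follow in one line from the left, right and inversion invariance of the Haar measure. Your closing remark about arbitrary probability measures is precisely the paper's continuous-measure variant.
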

\begin{proof}[Proof: (See Ref.~\cite{Scott_2008})]
\end{proof}

\begin{theorem}[Continuous-measure variant of Theorem~\ref{th:fu}]
Let $\nu$ be any probability measure on $\operatorname{U}(d)$, and let $t\ge 1$.
Define
\begin{equation}\label{def-Ft}
F_t(\nu)
:=\iint_{\operatorname{U}(d)\times\operatorname{U}(d)}
\Big| \operatorname{tr}\big[U(x)^\dagger U(y)\big]\Big|^{2t}\,d\nu(x)\,d\nu(y).
\end{equation}
Then
\begin{equation}\label{continuous-bound}
F_t(\nu)\;\ge\;\gamma(t,d),
\end{equation}
with equality if and only if $\nu$ is a unitary $t$-design.
\end{theorem}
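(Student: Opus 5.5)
The plan is the frame-potential argument, written with operator moments rather than finite sums. Define the $t$-th moment operator of $\nu$ on $(\mathbb{C}^d)^{\otimes t}\otimes(\overline{\mathbb{C}^d})^{\otimes t}$,
\begin{equation}
M_t(\nu):=\int_{\operatorname{U}(d)} U^{\otimes t}\otimes \overline{U}^{\otimes t}\,d\nu(U),
\end{equation}
and let $M_t^{\mathrm{H}}$ be the same object for the Haar measure. The first step is the identity $|\operatorname{tr}[U^\dagger V]|^{2t}=\operatorname{tr}\bigl[(U^{\otimes t}\otimes\overline U^{\otimes t})^\dagger (V^{\otimes t}\otimes\overline V^{\otimes t})\bigr]$. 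Integrating it against $d\nu(x)\,d\nu(y)$ and using Fubini, which is allowed because the integrand is continuous and bounded on a compact group, gives
\begin{equation}
F_t(\nu)=\operatorname{tr}\bigl[M_t(\nu)^\dagger M_t(\nu)\bigr]=\|M_t(\nu)\|_{\mathrm{HS}}^2 .
\end{equation}
Taking $\nu$ to be the Haar measure then identifies $\gamma(t,d)=\|M_t^{\mathrm H}\|_{\mathrm{HS}}^2$.

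The second step uses left and right invariance of the Haar measure. It gives $M_t^{\mathrm H}M_t(\nu)=M_t(\nu)M_t^{\mathrm H}=M_t^{\mathrm H}$, and also $(M_t^{\mathrm H})^2=M_t^{\mathrm H}=(M_t^{\mathrm H})^\dagger$. So $M_t^{\mathrm H}=:P$ is the orthogonal projector onto the trivial-isotypic (invariant) subspace of the representation $U\mapsto U^{\otimes t}\otimes\overline U^{\otimes t}$. From these relations,
\begin{equation}
\|M_t(\nu)-P\|_{\mathrm{HS}}^2=\|M_t(\nu)\|_{\mathrm{HS}}^2-2\,\mathrm{Re}\operatorname{tr}[P^\dagger M_t(\nu)]+\|P\|_{\mathrm{HS}}^2 .
\end{equation}
Since $\operatorname{tr}[P M_t(\nu)]=\operatorname{tr}[P]=\|P\|^2_{\mathrm{HS}}$, this yields
\begin{equation}
F_t(\nu)-\gamma(t,d)=\|M_t(\nu)-M_t^{\mathrm H}\|_{\mathrm{HS}}^2\ge 0 .
\end{equation}

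Equality holds exactly when $M_t(\nu)=M_t^{\mathrm H}$. This is the definition of a unitary $t$-design, because the entries of $M_t$ are precisely the averages of all degree-$(t,t)$ monomials in $U,\overline U$. This settles both directions of the ``if and only if''.

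Essentially none of this is hard. The point requiring the most care is justifying the invariance relations $M_t^{\mathrm H}M_t(\nu)=M_t^{\mathrm H}$ for a general Borel probability measure $\nu$. I would obtain them by writing
\begin{equation}
M_t^{\mathrm H}M_t(\nu)=\int_{\operatorname{U}(d)}\int_{\operatorname{U}(d)}(WU)^{\otimes t}\otimes\overline{WU}^{\otimes t}\,dW\,d\nu(U),
\end{equation}
then substituting $W\mapsto WU^{-1}$ in the inner Haar integral, and finally using $\nu(\operatorname{U}(d))=1$. Alternatively, the finite-set case of Theorem~\ref{th:fu} could be lifted by weak-$*$ approximation of $\nu$ by empirical measures. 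That route would give the inequality by continuity of $F_t$, but the equality characterization is cleaner through the Hilbert--Schmidt identity above, so I would use the direct argument.
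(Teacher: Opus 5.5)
Your proof is correct and is the same frame-potential argument the paper uses: the paper proves this statement only by citing Theorem~5.4 of Ref.~\cite{Scott_2008} ``with sums replaced by integrals,'' and expanding $\|M_t(\nu)-M_t^{\mathrm H}\|_{\mathrm{HS}}^2\ge 0$ via the Haar-invariance relations is that argument, carried out directly for a general probability measure. The one step you leave implicit, $\gamma(t,d)=\|M_t^{\mathrm H}\|_{\mathrm{HS}}^2$, follows at once from $\|P\|_{\mathrm{HS}}^2=\operatorname{tr}P=\int_{\operatorname{U}(d)}|\operatorname{tr}U|^{2t}\,dU$, or equivalently from the substitution $V\mapsto UV$ in the Haar double integral.
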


\begin{proof}[Proof: (Same as in Ref.~\cite{Scott_2008}: replace sums by integrals)]
\end{proof}

The double integral $F_t(\nu)$ can be approximated numerically by Monte-Carlo sampling: draw \(M\) independent pairs \((U_i,V_i)\) from the product measure \(\nu\otimes\nu\) and form the unbiased estimator
\begin{equation}\label{mc-estimator}
F_{t,M}:=\frac{1}{M}\sum_{i=1}^{M} X_i
\quad\text{with}\quad
X_i:=\big|\!\operatorname{tr}(U_i^\dagger V_i)\big|^{2t}.
\end{equation}
The strong law of large numbers implies \(F_{t,M}\xrightarrow{}F_t(\nu)\) as \(M\to\infty\), and the resulting distribution can be approximated according to the central limit theorem by a Gaussian distribution $\mathcal{N}$
\begin{equation}
F_{t,M}\approx\mathcal{N}\Big(F_t(\nu),\frac{\sigma_t^2}{M}\Big),
\end{equation}
with mean $F_t(\nu)$ and variance $\sigma_t^2/M=\operatorname{Var}\big(\,|\! \operatorname{tr}(U^\dagger V)|^{2t}\,\big)/M$ for large \(M\). Finite sampling may produce a violation of the inequality in Eq.~\eqref{continuous-bound} due to sampling noise.  A robust numerical quantity is, therefore, the absolute deviation
\begin{equation}
\delta_{t}:=\big| F_{t,M}-\gamma(t,d)\big|.
\end{equation}
The sampling fluctuations of \(\delta_{t}\) are of the order \(\sigma_t / \sqrt{M}\). Thus, when plotting \(\delta_{t}\), one expects a plateau of the order of this statistical uncertainty. If \(U\) and \(V\) are independently sampled from the Haar measure on \(\mathrm{U}(d)\), with \(d \geq 4\) as in the main text, one finds (see Ref.~\cite{collins2002momentscumulantspolynomialrandom})
\begin{align}
\operatorname{Var}\!\left(\left|\operatorname{tr}\big(U^{\dagger}V\big)\right|^{2}\right) &= 1, \\
\operatorname{Var}\!\left(\left|\operatorname{tr}\big(U^{\dagger}V\big)\right|^{4}\right) &= 20, \\
\operatorname{Var}\!\left(\left|\operatorname{tr}\big(U^{\dagger}V\big)\right|^{6}\right) &= 658, \\
\operatorname{Var}\!\left(\left|\operatorname{tr}\big(U^{\dagger}V\big)\right|^{8}\right) &= 32748.
\end{align}

\section{Convergence to Haar ensemble for individual random choices of pulse amplitudes} \label{app_typ}
As shown in Fig.~\ref{fig:conv_plot}, $\delta_t$ exhibits the same exponential decay for each individual random choice of the pulse amplitudes when compared to the decay of $\delta_t$ averaged over random choices of the pulse amplitudes. The small fluctuations relative to the exponential decay for the individual random choices of the pulse amplitudes were attributed to the finite size of the spin system. Indeed, when considering only two instead of three spins in the Ising spin chain, the fluctuations significantly increase as shown in Fig.~\ref{fig:conv_plot_2}. Hence, we anticipate that, for an Ising spin chain of $n=6$ spins considered in the main text, these fluctuations become negligible.

\begin{figure}[h]
  \centering
  \begin{overpic}[width=0.9\columnwidth]{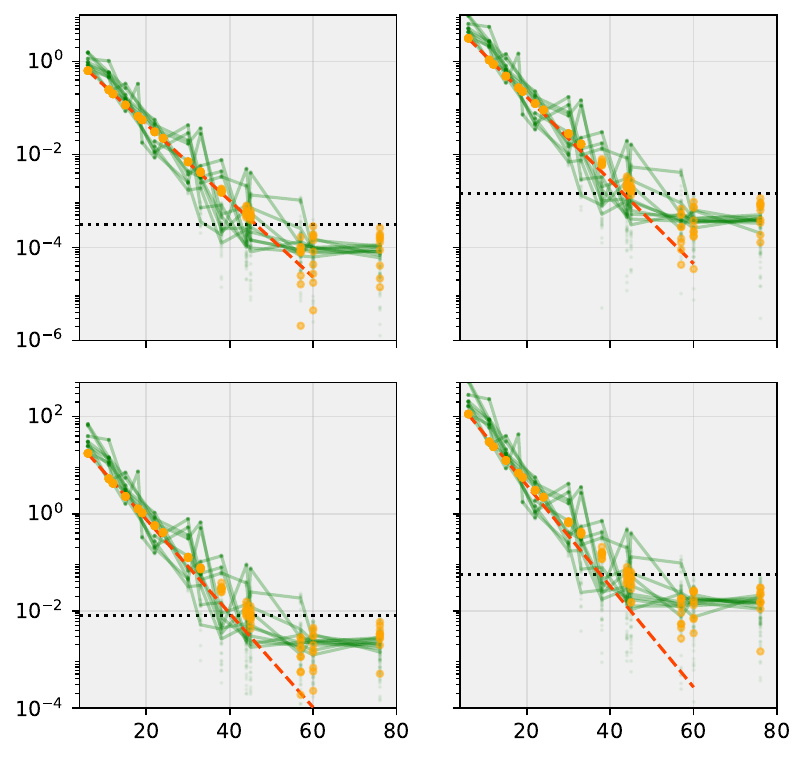}
    % Common x-axis label (centered at bottom)
    \put(30,-1){\makebox(0,0){\small $N_\text{L}N_\text{P}$}}
    \put(78,-1){\makebox(0,0){\small $N_\text{L}N_\text{P}$}}

    \put(-2.5,70){{\small $\delta_1$}}
    \put(52,70){{\small $\delta_2$}}
    \put(-2.5,25){{\small $\delta_3$}}
    \put(52,25){{\small $\delta_4$}}

    \put(60,40){%
      \begin{tikzpicture}
        \node[
          fill=white,
          fill opacity=0.8,
          text opacity=1,
          draw=black,
          rounded corners=2pt,
          inner sep=2pt
        ] {
          \begin{tikzpicture}[x=1em,y=1em]
    
            \draw[orange, thick,
                  mark=*,
                  mark size=1.5pt, 
                  mark options={solid}]
              plot coordinates {(0.9,3)};
            \node[right=0.6em] at (1.8,3) {\scriptsize $\delta_t$};

            \draw[myGreen] (0,2) -- (1.8,2);
            \draw[myGreen,
                  mark=*,
                  mark size=1.5pt, 
                  mark options={solid}]
              plot coordinates {(0.9,2)};
            \node[right=0.6em] at (1.8,2) {\scriptsize $\delta_t$ with fixed $u^{(\ell, p)}$};

            \draw[orangered, thick, dashed] (0,1) -- (1.8,1);
            \node[right=0.6em] at (1.8,1) {\scriptsize Exponential fit};

            \draw[black, thick, dotted] (0,0) -- (1.8,0);
            \node[right=0.6em] at (1.8,0) {\scriptsize Expected plateau};
          \end{tikzpicture}
        };
      \end{tikzpicture}%
    }

  \end{overpic}
  \caption{Same as Fig.~\ref{fig:conv_plot} but for an Ising spin chain with two spins: Exponential convergence of the first four moments of the ensemble of unitaries reachable by the RALLY$_\text{T}$ propagator $U_\text{R,T}$ to the corresponding moments of a Haar-random distribution. The difference $\delta_t$~\eqref{eq:delta_t} between the corresponding moments is shown for $t=1,2,3,4$ as a function of $N_{\mathrm{L}}N_{\mathrm{P}}$. Orange points show $\delta_t$ with layer durations drawn uniformly from $[0,10]$ and control amplitudes drawn uniformly from $[-1,1]$. The dashed red line is a linear fit on the logarithmic scale. More importantly, results for $\delta_t$ follow the same decay curve also when \textit{only layer durations} are sampled and the control amplitudes are fixed (shown by the 10 green lines in each panel). We attribute the fluctuations of the green lines relative to the exponential decay to the finite system sizes. The plateaus at larger values of $N_{\mathrm{L}}N_{\mathrm{P}}$ are due to the finite sample size and they are consistent with the expected values from the Haar distribution (black dotted lines); see Appendix~\ref{App:Harr_conv} for details. For each value of $N_{\mathrm{L}}N_{\mathrm{P}}$, where $N_{\mathrm{L}}\in\{6,11,15,19\}$ and $N_{\mathrm{P}}\in\{1,2,3,4\}$, we use $10^8$ samples and average over 10 repetitions for one particular random choice of values of $h_{ix}$ and $h_{iz}$, cf. Eq.~\eqref{eq:random_spin_H}. }
  \label{fig:conv_plot_2}
\end{figure}

\begin{figure*}[bt]
  \centering
  \includegraphics[width=0.45\textwidth]{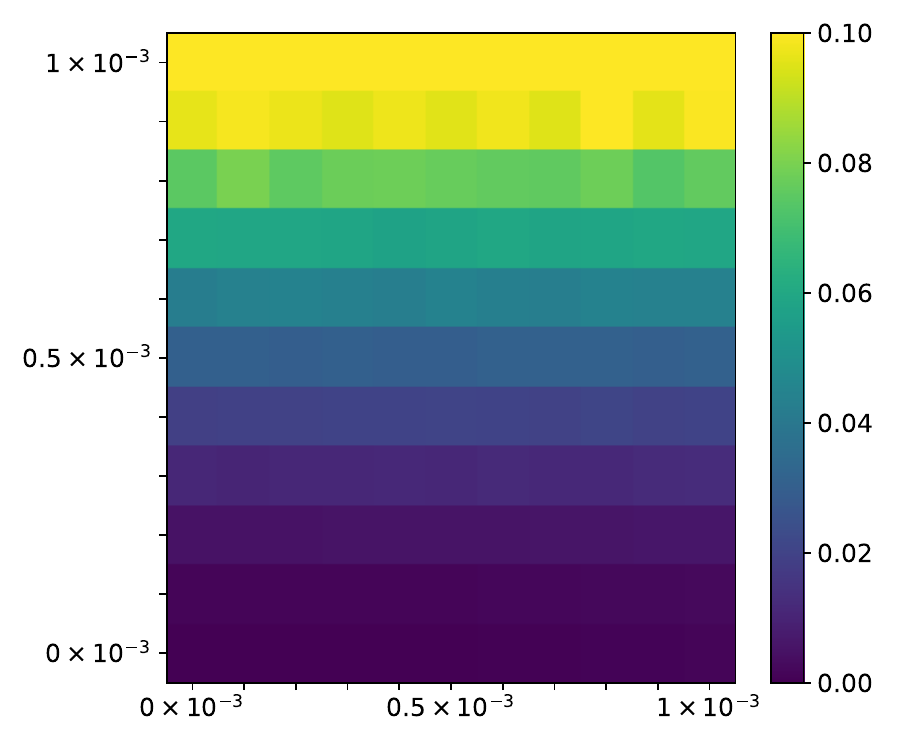}
  \hfill
  \includegraphics[width=0.45\textwidth]{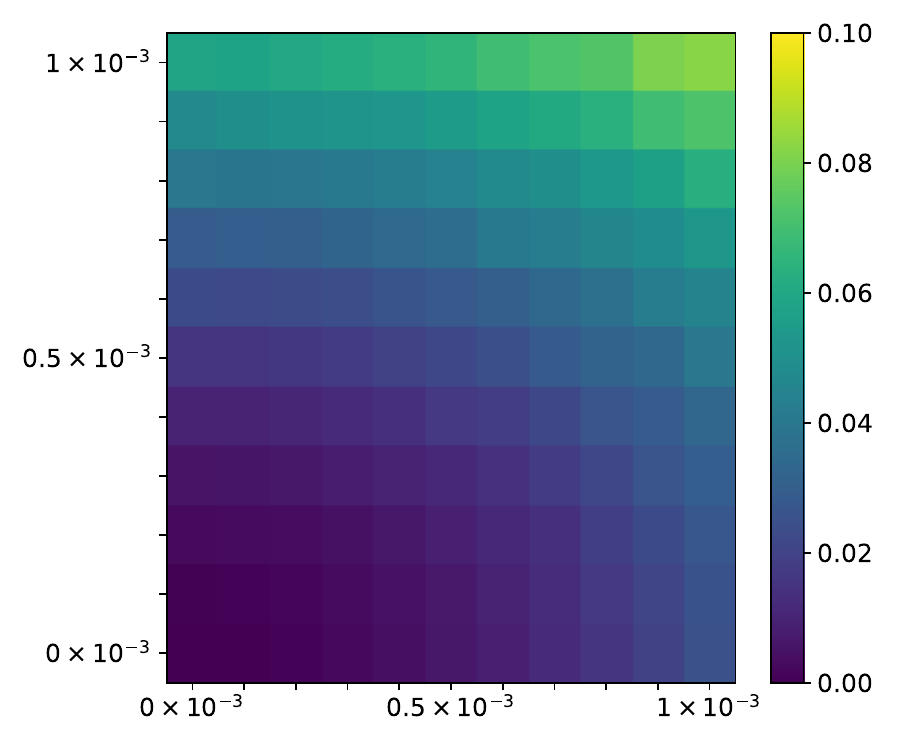}
  \begin{picture}(0,0)
    % panel labels
    \put(-520,180){\textbf{(a)}}
    \put(-240,180){\textbf{(b)}}

    % x-axes: sigma_tau
    \put(-390, 0){\makebox(0,0){$\sigma_\tau$}} % left plot
    \put(-110, 0){\makebox(0,0){$\sigma_\tau$}} % right plot

    % left y-axes: sigma_u
    \put(-515, 95){\rotatebox{90}{$\sigma_u$}}   % left plot
    \put(-240, 95){\rotatebox{90}{$\sigma_u$}}   % right plot

    % right y-axes: ΔJ_s
    \put( 0, 90){\rotatebox{90}{$\Delta J_\text{s}$}}   % left plot
    \put( -280, 90){\rotatebox{90}{$\Delta J_\text{s}$}}   % right plot
    
  \end{picture}

  \caption{Mean change of the state-transfer infidelity $\Delta J_\text{s}$ for GRAPE~(a) and RALLY$_\text{T}$~(b) on a 6-spin Ising spin chain under Gauss-distributed perturbations of control amplitudes $\delta u^{(\ell,p)}$ and pulse durations $\delta \tau_\ell$. Each point is averaged over 10 perturbation for each of the 10 different random values of $h_{ix}$ and $h_{iz}$. When compared to solutions of the GRAPE algorithm, the solutions of RALLY$_\text{T}$ appear more robust to deviations of the pulse amplitudes but less robust to deviations of the time durations due to the layer size $N_\text{P}>1$, as described in the text.}
  \label{fig:robustness}
\end{figure*}

\section{Robustness of solutions of RALLY$_\text{T}$} \label{app:roboustness}
In this section, we analyze how perturbations of pulse durations and control amplitudes affect the optimized solutions of RALLY$_\text{T}$. We first derive a perturbative bound on the deviation of the total propagator and then relate this deviation to the change in state-transfer infidelity~\eqref{eqn_state-trans}. Finally, we compare these predictions with the results of numerical simulations of the Ising spin chain~\eqref{eq:random_spin_H}.

Let us consider the RALLY$_\text{T}$ propagator introduced in Eq.~\eqref{eq:RALLY_T}
\begin{equation} \label{eqn_copy_RT}
    U_\text{R,T} = \prod_{\ell=1}^{N_\text{L}}\prod_{p=1}^{N_\text{P}} U_{\ell,p}
\end{equation}
with $U_{\ell,p} = \exp\left[-i\frac{\tau_\ell}{N_\text{P}}{\cal H}^{(\ell,p)}\right]$ and
\begin{equation} \label{eqn_hlp}
    {\cal H}^{(\ell,p)}\equiv{\cal H}_{0} + u^{(\ell,p)}{\cal H}_{c}.
\end{equation}
For the perturbation of the parameters $\tau_\ell' = \tau_\ell + \delta \tau_\ell$ and $ {u^{(\ell,p)}}' = u^{(\ell,p)} + \delta u^{(\ell,p)}$, we define the perturbed propagators
\begin{equation}
    U_{\ell,p}' = \exp\biggl[-i\frac{\tau_\ell'}{N_\text{P}}\bigl({\cal H}_{0} + {u^{(\ell,p)}}'{\cal H}_{c}\bigr)\biggr],
\end{equation}
such that, for small perturbations, we can write
\begin{equation}
    \Delta U_{\ell,p} \equiv U_{\ell,p}' - U_{\ell,p}
    \approx \frac{\partial U_{\ell,p}}{\partial \tau_\ell}\delta \tau_\ell + \frac{\partial U_{\ell,p}}{\partial u^{(\ell,p)}}\delta u^{(\ell,p)}.
\end{equation}
For the total propagator $U_\text{R,T}$ in Eq.~\eqref{eqn_copy_RT}, we obtain for the average Hilbert-Schmidt norm of the deviation
\begin{eqnarray}\label{eq_eulp}
&&\mathbb{E}\|\Delta U_\text{R,T}\|_{HS} \\
    &&\le\sqrt{\frac{2}{\pi}}
        \left[
        \frac{\sigma_\tau}{N_\text{P}}\sum_{\ell=1}^{N_\text{L}}\sum_{p=1}^{N_\text{P}} \|{\cal H}^{(\ell,p)}\|_{HS} +\sigma_u  \|{\cal H}_c\|_{HS}\sum_{\ell=1}^{N_\text{L}} \tau_\ell \right],\nonumber
\end{eqnarray}
where we assumed Gauss-distributed perturbations of $\tau_\ell$ with standard deviation $\sigma_\tau$ and of $u^{(\ell,p)}$ with standard deviation $\sigma_u$. The same expression with $N_\text{P}=1$ applies to the GRAPE propagator.

The change in state-transfer infidelity~\eqref{eqn_state-trans} due to the perturbations is
\begin{eqnarray}
    \Delta J_\text{s} \equiv J_\text{s}(U_\text{R,T}') - J_\text{s}(U_\text{R,T})
= J_\text{s}(U_\text{R,T}'),
\end{eqnarray}
where we assumed that $J_\text{s}(U_{\ell,p})=0$, i.e., the optimization found the optimal solution very accurately, and
\begin{equation}
    J_\text{s}(U_\text{R,T}')= 1 - \bigl|\braket{\psi_{\mathrm{T}}|{U_\text{R,T}'|\psi_0}}\bigr|^2,
\end{equation}
with $\ket{\psi_0}$ being the initial state and $\ket{\psi_{\mathrm{T}}}$ the target state.

Writing $U_\text{R,T}' = U_\text{R,T} + \Delta U_\text{R,T}$ and expanding to first order in $\Delta U_\text{R,T}$, one finds
\begin{equation}
|\Delta J_\text{s}|\lesssim 2\,\bigl|\bra{\psi_0}U_\text{R,T}^\dagger \Delta U_\text{R,T}\ket{\psi_0}\bigr|
\le 2\,\|\Delta U_\text{R,T}\|_{HS}.
\end{equation}
This implies that $\mathbb{E}\,\|\Delta U_\text{R,T}\|_{HS}$ in Eq.~\eqref{eq_eulp} bounds the change of the state-transfer infidelity from above.

We validate the perturbative bound with numerical simulations of the state-transfer problem on the 6-spin Ising spin chain (see Sec.~\ref{sec_res}) using both GRAPE and RALLY$_\text{T}$. For each pair of standard deviations $(\sigma_u,\sigma_\tau)$, we generate a set of 10 different values of $h_{ix}$ and $h_{iz}$, optimize the pulse sequences with GRAPE and RALLY$_\text{T}$ to minimize $J_\text{s}$ and then add independent Gaussian perturbations to every pulse ($\delta u^{(\ell,p)}\sim\mathcal{N}(0,\sigma_u)$ and $\delta \tau_\ell\sim\mathcal{N}(0,\sigma_\tau)$ with $\mathcal{N}(0,\sigma)$ being a Gaussian distribution with mean 0 and standard deviation $\sigma$). We obtain $\Delta J_\text{s}$ and average it over 10 different perturbations for each of the 10 values of $h_{ix}$ and $h_{iz}$. The results are shown in Fig.~\ref{fig:robustness}.

For RALLY$_\text{T}$, the contribution of the control-amplitude uncertainty to $\mathbb{E}\|\Delta U_\text{R,T}\|_{HS}$ (last term in Eq.~\eqref{eq_eulp}) depends only on the total time duration of the pulse-sequence  $\sum_{\ell=1}^{N_\text{L}} \tau_\ell$, but not on the total number of pulses $N_\text{L}N_\text{P}$. Thus, GRAPE and RALLY$_\text{T}$ with similar total pulse-sequence time are expected to behave similarly with respect to amplitude uncertainty independent of the layer size \(N_\text{P}\). This is corroborated by the results shown in Fig.~\ref{fig:robustness}, where solutions of RALLY$_\text{T}$  indicate a slightly higher robustness than solutions of the GRAPE algorithm to perturbations of the pulse amplitudes. In contrast, the contribution of the time uncertainty to $\mathbb{E}\|\Delta U_\text{R,T}\|_{HS}$ (first term in Eq.~\eqref{eq_eulp}) is proportional to the total number of pulses $N_\text{L}N_\text{P}$. Thus, the robustness of the solutions of RALLY$_\text{T}$  is expected to be reduced by the factor $N_\text{P}>1$, which is confirmed by the results shown in Fig.~\ref{fig:robustness}.

\section{Experimental constraints for the globally-driven Rydberg platform} \label{app:constraints}
For the numerical simulations of the globally-driven Ryd\-berg platform in Sec.~\ref{sec_res}, we use the physical constraints of the PASQAL machine operating with Rubidium atoms~\cite{Silverio2022pulseropensource}. These constraints are detailed in Table~\ref{table:exp_cons}.

\begin{table}[h]
    \centering
    \begin{tabular}{@{}ll@{}}
        \toprule
        Parameter & Value \\ 
        \midrule
        Rydberg level $n$ & 70 \\ 
        Minimum Rabi frequency $\Omega_\text{min}$ & 0 MHz \\
        Maximum Rabi frequency $\Omega_\text{max}$ & 10 MHz \\
        Minimum Detuning $\Delta_\text{min}$ & -10 MHz \\
        Maximum Detuning $\Delta_\text{max}$ & 10 MHz \\
        Minimum distance between atoms & 5 $\mu$m \\
        Minimum pulse duration & 4 ns \\
        Maximum pulse duration & $10^5$ $\mu$s \\
        Interaction coefficient $C_{6,rr}$ & 5420 GHz $\mu m^6$ \\ 
        \bottomrule
    \end{tabular}
    \caption{Physical constraints of the PASQAL machine operating with Rubidium atoms~\cite{Silverio2022pulseropensource} considered in this article.}
    \label{table:exp_cons}
\end{table}

\section{Details of unitary-synthesis optimization} \label{app:CNOT_coordinates}
We choose a non-symmetric spatial Rydberg-atom configuration for the unitary-synthesis problem. The positions of the atoms are (in \(\mu\)m): $\mathbf{r}_{q_0} = (-8,\; 0)$, \mbox{$\mathbf{r}_{q_1} = (0,\; 9.6)$}, $\mathbf{r}_{q_2} = (8.8,\; 0)$. The settings for the optimization were chosen as follows.
\paragraph{dCRAB settings:}
We use the QuOCS open-source optimal control suite~\cite{QuocsRossignolo2023} for the dCRAB simulations with the adaptive Nelder-Mead algorithm~\cite{adaptie_Nelder}. The number of super-iterations is 3 and the detuning is restricted to $\Delta \in[-10,10]\,\text{MHz}$ by setting bounds in the optimization. The final evolution time $T$ is $500\,\mu \text{s}$ and the bandwidth limitation is \(\Delta\Omega = 100/ T \) to ensure successful optimization. All optimizer settings are summarized in Table~\ref{tab:opt_settings}.

\paragraph{RALLY settings:}
The layer size is \mbox{$N_P=5$} and the detuning amplitude is restricted to \mbox{$\Delta \in[-10,10]\,\text{MHz}$}. To ensure that these constraints are satisfied during RALLY$_\text{A}$ optimizations, we enforce $\bm{\xi} \in [0,1]^{N_\text{L}}$. For RALLY$_\text{T}$, the minimum duration is set to $4\,\text{ns}$ for every pulse in the optimization and the initial layer durations are chosen from a uniform distribution in [0,1]\,$\mu$s. Optimization is performed with the Nelder–Mead method using the same settings as for the dCRAB algorithm.

\begin{table}[h!]
\centering
\small
\begin{tabular}{@{}lccc@{}}
\toprule
Parameter & dCRAB & RALLY$_\text{T}$ & RALLY$_\text{A}$\\ \midrule
Structure & 3 super-iter. & \(N_P=5\) & \(N_P=5\) \\
Detuning bounds & \multicolumn{3}{c}{\([-10,10]\,\mathrm{MHz}\)} \\
Final evolution time \(T\) & \(500\,\mu\mathrm{s}\) & — & \(500\,\mu\mathrm{s}\) \\
Bandwidth limit \(\Delta\Omega\) & \( 100/T\) & — & — \\
\texttt{xatol} & \multicolumn{3}{c}{\(10^{-8}\)} \\
\texttt{fatol} & \multicolumn{3}{c}{\(10^{-8}\)} \\
Optimizer & \multicolumn{3}{c}{adaptive Nelder--Mead} \\
Max. FoM evaluations & \multicolumn{3}{c}{\(10^{6}\)} \\
\bottomrule
\end{tabular}
\caption{Optimization settings for dCRAB and RALLY optimizations. Centered entries (spanning all columns) denote parameters shared by all methods.}
\label{tab:opt_settings}
\end{table}

\section{Details of the ground-state preparation} \label{app:VQS_optimization}
The choice of spatial configuration of the Rydberg atoms is crucial because a configuration that reflects the symmetry of the target Hamiltonian can constrain the optimization problem to a symmetry sector, thereby reducing the dimensionality of the relevant Hilbert subspace and easing the optimization. At the same time, symmetries of the spatial configuration beyond those of the Hamiltonian can prohibit the reachability of the target state, when the initial and the target states are localized in different symmetry sectors~\cite{gs-reachability}. 

\subsection{Details of the optimization for the $\text{H}_\text{2}$ molecule} \label{app:H2}
To describe the spin-orbital degrees of freedom for the H$_2$ molecule with an interatomic distance of 2 Ångström, we use the following target Hamiltonian
\begin{equation}
\begin{aligned}
\cal{H}_\text{T} =\ &
-0.81054798\,\mathds{1}
+ 0.17218393\,\sigma_{0}^{z} \\
&- 0.22575349\,\sigma_{1}^{z}
+ 0.17218393\,\sigma_{2}^{z} \\
&- 0.22575349\,\sigma_{3}^{z}
+ 0.12091263\,\sigma_{1}^{z}\sigma_{0}^{z} \\
&+ 0.16892754\,\sigma_{2}^{z}\sigma_{0}^{z}
+ 0.16614543\,\sigma_{2}^{z}\sigma_{1}^{z} \\
&+ 0.16614543\,\sigma_{3}^{z}\sigma_{0}^{z}
+ 0.17464343\,\sigma_{3}^{z}\sigma_{1}^{z} \\
&+ 0.12091263\,\sigma_{3}^{z}\sigma_{2}^{z}
+ 0.04523280\,\sigma_{3}^{x}\sigma_{2}^{x}\sigma_{1}^{x}\sigma_{0}^{x} \\
&+ 0.04523280\,\sigma_{3}^{x}\sigma_{2}^{x}\sigma_{1}^{y}\sigma_{0}^{y}
+ 0.04523280\,\sigma_{3}^{y}\sigma_{2}^{y}\sigma_{1}^{x}\sigma_{0}^{x} \\
&+ 0.04523280\,\sigma_{3}^{y}\sigma_{2}^{y}\sigma_{1}^{y}\sigma_{0}^{y},
\end{aligned}
\end{equation}
which is obtained by using the Jordan-Wigner transformation. Because of the use of the Jordan-Wigner transformation, the above Hamiltonian exhibits the symmetry of the simultaneous exchange of spins $0\leftrightarrow2$ and $1\leftrightarrow3$. Hence, we arrange the Rydberg atoms on a rhombus, cf. Fig.~\ref{Fig:H2_config}, that respects this symmetry, thereby allowing the initial state $\ket{\psi_0}$ and the corresponding ground state to belong to a fully controllable symmetry sector of $N=9$ dimensions described by $2N-2=16$ real parameters. In contrast, a symmetric ring configuration would introduce additional symmetries, which are incompatible with the Hamiltonian, thereby preventing the reachability of the H$_2$ ground state~\cite{gs-reachability}. 

\begin{figure}[t]
    \centering
    \hfill
    % First image with label (a) at the top left
    \begin{minipage}[t]{0.35\columnwidth}
        \includegraphics[width=\linewidth]{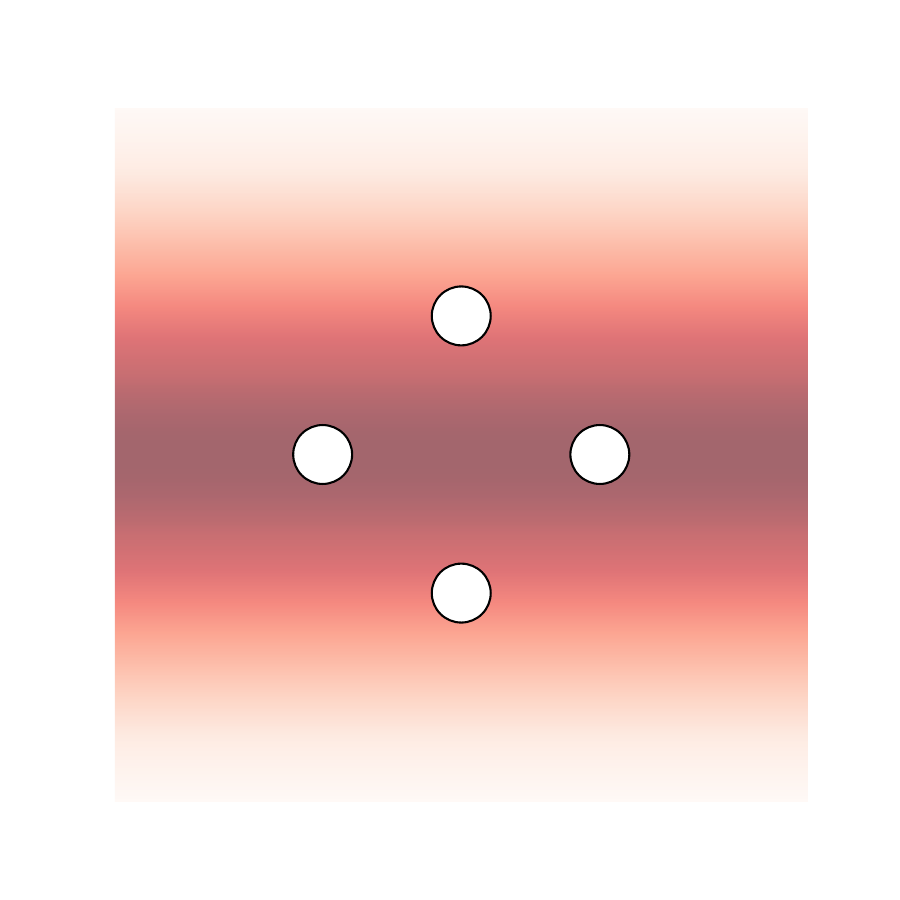}
        \begin{picture}(0,0)
            \put(-50, 90){\makebox(0,0)[lt]{\textbf{(a)}}}
        \end{picture}
    \end{minipage}
    \hfill
    % Second image with label (b) at the top left
    \begin{minipage}[t]{0.35\columnwidth}
        \includegraphics[width=\linewidth]{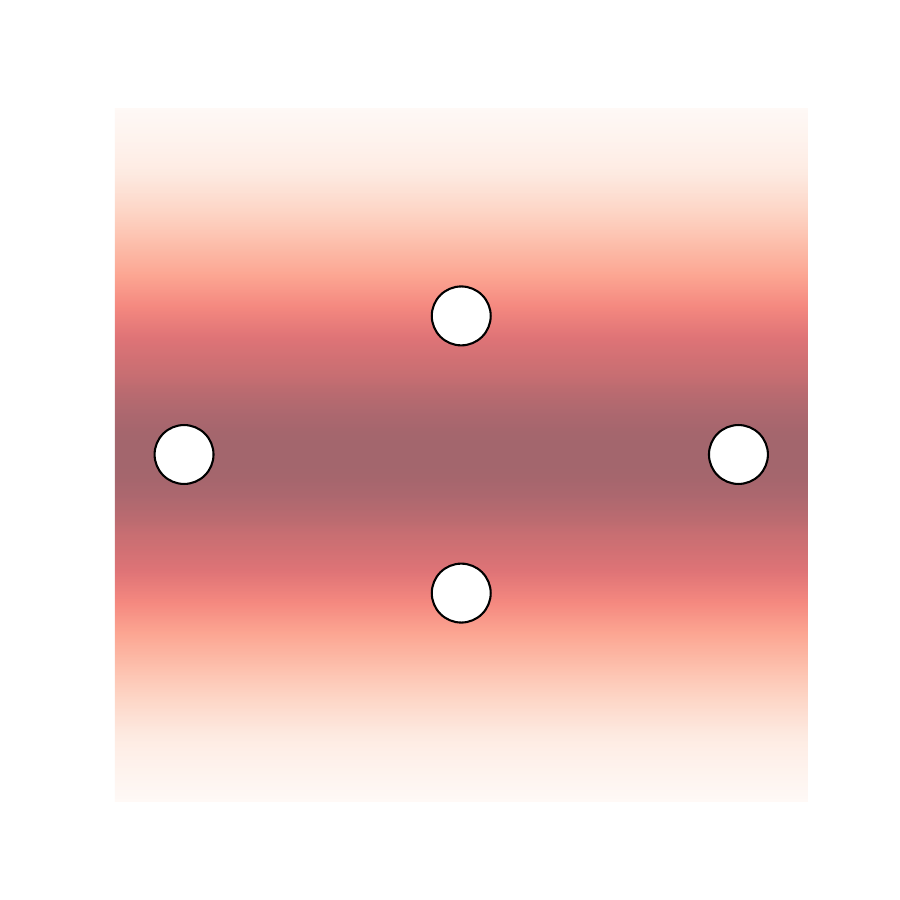}
        \begin{picture}(0,0)
            \put(-50, 90){\makebox(0,0)[lt]{\textbf{(b)}}}
        \end{picture}
    \end{minipage}
    \hfill
    % Caption for the figure
    \caption{Possible spatial configurations of the Rydberg atoms for the ground-state preparation of H$_2$ molecule. (a) The ring configuration introduces additional symmetries not compatible with the target Hamiltonian. (b) The rhombic shape reflects the exchange symmetry of the target Hamiltonian. The atomic coordinates for this configuration, ordered by spin index (in \(\mu\)m), are: (14, 0), (0, 7), (-14, 0), (0, -7).}
    \label{Fig:H2_config}
\end{figure}

For the optimization, we used the same setting as for the unitary-synthesis problem except that we limited the maximum number of figure-of-merit evaluations to \(\leq10^5\) for all methods. For RALLY$_\text{A}$, we set the final evolution time to \(T = 100\,\mu\mathrm{s}\). For dCRAB, a further bandwidth limit of \(\Delta\Omega = 30/T\) was chosen.

\subsection{Details of the optimization for the $\text{NO}_\text{3}$ molecule} \label{app:NO3}
To describe the $\text{NO}_\text{3}$ molecule with the spatial coordinates of the individual atoms given in Table~\ref{table:NO3}, we use a target spin Hamiltonian provided in Ref.~\cite{code}. Since parity encoding~\cite{parity} was used, the four-qubit Hamiltonian does not exhibit any symmetries related to the exchange of spins. Hence, a non-symmetric Rydberg-atom configuration shown in Fig.~\ref{NO3_config} was chosen to obtain full controllability in the full Hilbert space of dimension $N=16$ described by $2N-2=30$ real parameters.
\begin{table}[h]
    \centering
    \begin{tabular}{lccc}
        \hline
        Atom & X & Y & Z \\
        \hline
        N & -0.0225 & -0.0483 & 0.0029 \\
        O & -0.6411 & 1.0149 & -0.0004 \\
        O & -0.6284 & -1.1191 & 0.0095 \\
        O & 1.2096 & -0.0414 & -0.0004 \\
        \hline
    \end{tabular}
    \caption{Spatial coordinates of atoms of the NO$_3$ molecule in Ångström used in our calculations.}
    \label{table:NO3}
\end{table}

For the optimization, we retained the parameters used for the unitary-synthesis problem. The only change was that, for dCRAB, the total evolution time was set to \mbox{\(T = 500\,\mu\mathrm{s}\)}
or \(T = 3000\,\mu\mathrm{s}\) and the bandwidth was limited to \(\Delta\Omega = 80/T\). For RALLY$_\text{A}$, we used \(T = 500\,\mu\mathrm{s}\).
\begin{figure}[h]
  \centering
  \includegraphics[width=0.2\textwidth]{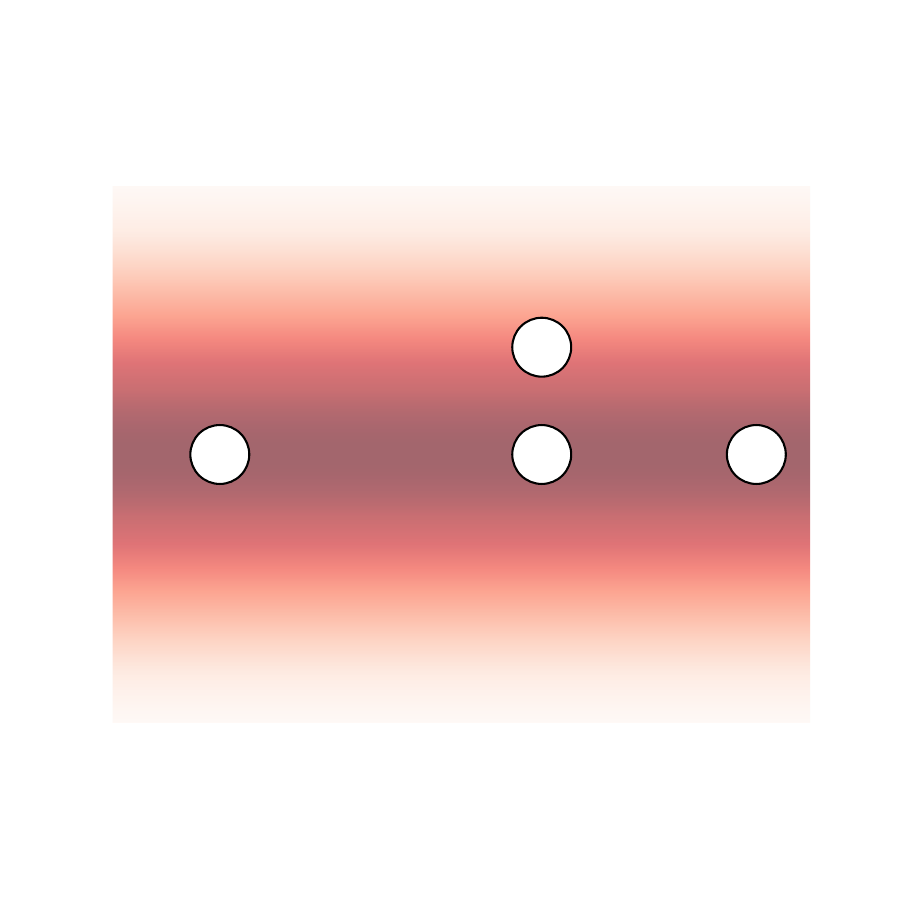}
  \caption{Spatial configuration of Rydberg atoms used for the ground-state preparation of the NO$_3$ molecule. The atomic coordinates for this configuration ordered by spin index (in $\mu$m) are: (10, 0), (0, 20), (0, 0), (0, -30).}
  \label{NO3_config}
\end{figure}

\subsection{Details of the optimization for the CH molecule}\label{app:CH}
To describe the CH molecule with an interatomic distance of 1.12388 Ångström, we use the target Hamiltonian provided in Ref.~\cite{code}. 
% \begin{table}[h]
%     \centering
%     \begin{tabular}{lccc}
%         \hline
%         Atom & X Coordinate & Y Coordinate & Z Coordinate \\
%         \hline
%         C & 0.0000 & 0.0000 & -0.00198 \\
%         H & 0.0000 & 0.0000 & 1.1219 \\
%         \hline
%     \end{tabular}
%     \caption{Coordinates of atoms of the CH molecule used in our calculation (in Ångström).}
%     \label{table:CH}
% \end{table}
Since the Jordan-Wigner transformation was used to obtain the six-qubit target Hamiltonian, the Hamiltonian exhibits a permutation symmetry similar to the symmetry of the target Hamiltonian for the H$_2$ molecule. Hence, a spatial configuration of Rydberg atoms similar to the one shown in Fig.~\ref{Fig:H2_config}(b), for example, the spatial configuration in Fig.~\ref{Fig:CH_config}(a), could seem suitable at first sight. However, for the spatial configuration of the Rydberg atoms in Fig.~\ref{Fig:CH_config}(a), we observe that the initial state $\ket{\Vec{0}}$ is in a different symmetry sector compared to the ground state of CH. Therefore, we use the non-symmetric Rydberg-atom configuration shown in Fig.~\ref{Fig:CH_config}(b).

\begin{figure}[h]
    \centering
    \hfill
    % First image with label (a) at the top left
    \begin{minipage}[t]{0.4\columnwidth}
        \includegraphics[width=\linewidth]{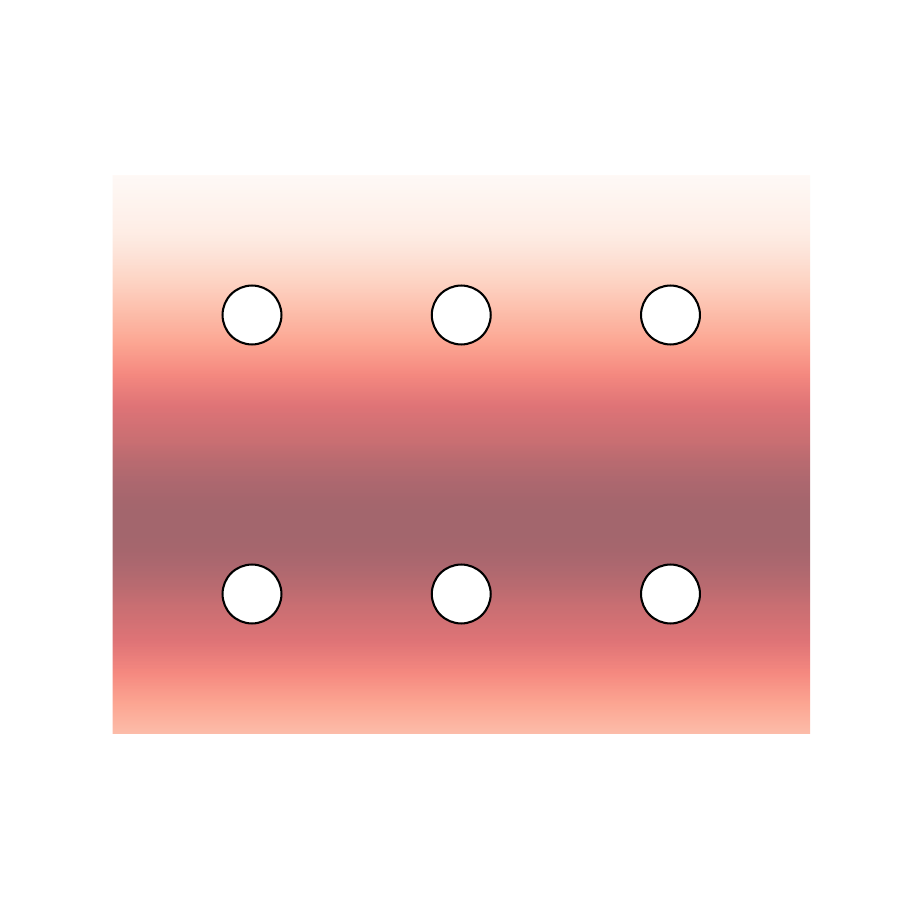}
        \begin{picture}(0,0)
            \put(-50, 100){\makebox(0,0)[lt]{\textbf{(a)}}}
        \end{picture}
    \end{minipage}
    \hfill
    % Second image with label (b) at the top left
    \begin{minipage}[t]{0.4\columnwidth}
        \includegraphics[width=\linewidth]{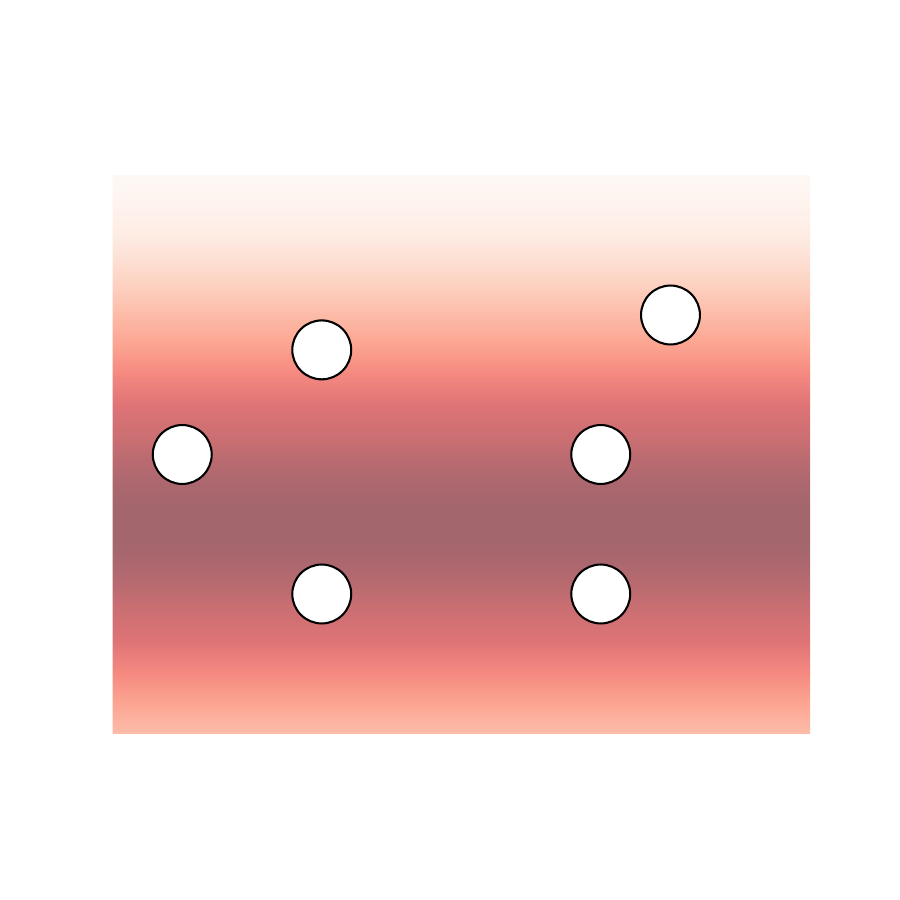}
        \begin{picture}(0,0)
            \put(-50, 100){\makebox(0,0)[lt]{\textbf{(b)}}}
        \end{picture}
    \end{minipage}
    \hfill
    % Caption for the figure
    \caption{Spatial configurations of Rydberg atoms for the ground-state preparation of the CH molecule. (a) A symmetric configuration that cannot be used because the initial state $\ket{{000000}}$ belongs to a different symmetry sector than the ground state of the CH molecule. (b) This non-symmetric configuration allows us to reach the ground state of CH. The atomic coordinates for this configuration ordered by the spin index (in $\mu$m) are: (0, -12), (12, -24), (21, -12), (0, 12), \mbox{(12, 12),} (24, 18).}
    \label{Fig:CH_config}
\end{figure}
For the optimization, we use the same settings as for the unitary-synthesis case. The maximum number of figure-of-merit evaluations is limited to \(\leq1.5 \times 10^6\) for all methods. For dCRAB, we set the final evolution time to \(T = 2000\,\mu\mathrm{s}\) and the bandwidth limit to \(\Delta\Omega = 200/T\). For RALLY$_\text{A}$, we also use \(T = 2000\,\mu\mathrm{s}\).

\section{Von-Neumann entropy of the ground states of the molecules} \label{app:entropy}
% We quantify entanglement by calculating the von-Neumann entropy $S$ of the reduced density matrix of a single qubit and averaging $S$ over all qubits. 
We quantify entanglement by calculating the average single-qubit von-Neumann entropy of the reduced density matrices $\rho_i$ of each qubit
\begin{equation}
    S = -\frac{1}{n} \sum_{i=1}^n \mathrm{Tr}\!\left[\rho_i \log \rho_i\right],
\end{equation}
where $n$ is the number of qubits.
Table~\ref{table:entanglment} shows the obtained von-Neumann entropies $S$ along with the corresponding symmetry-sector dimensions. The ground states of NO$_3$ and CH exhibit significant higher entanglement than the ground state of H$_2$. 
% The von-Neumann entropy was determined by means of the exact diagonalization of the Hamiltonian operators.

\begin{table}[h]
    \centering
    \begin{tabular}{p{1.5cm}p{1.5cm}p{1.5cm}p{1.5cm}}
        \hline
         & H$_2$ & NO$_3$ & CH  \\
        \hline
         dim & 9 & 16 & 64  \\
        \hline         
        $S$  & 0.096 & 0.818 & 0.352  \\
        \hline
    \end{tabular}
    \caption{Entanglement $S$ and relevant symmetry sector dimension dim of the three molecules analyzed.}
    \label{table:entanglment}
\end{table}

\section{Calculation of gradients}  \label{app:grad}
\subsection{RALLY$_\text{T}$}

To calculate the gradient of a figure of merit for the RALLY$_\text{T}$ method, an analytical expression for the derivative
$ \partial U_{\text{R}}/\partial \tau_{L}$ is required. Let us first introduce the following notation
%-----------------------------------------------
% 1.  Shorthand notation
%-----------------------------------------------
\newcommand{\UpL}[2]{U_{#1}^{(#2)}}      % U_p^(ℓ)
\def\Up{\UpL{p}{\ell}}                   % U_p^(ℓ) with implicit ℓ
\begin{equation}
U^{(\ell,p)}=\exp\left({-i\,\frac{\tau_{\ell}}{N_\text{P}}{\cal H}^{(\ell,p)}}\right) , 
\ \ \ 
U_{\text{R,T}}=
     \prod_{\ell=1}^{N_\text{L}}
     \prod_{p=1}^{N_\text{P}}
     U^{(\ell,p)},
\end{equation}

% Shorthands
\def\Upl{U_{p}^{(\ell)}}  % single ℓ–factor
\def\UpL#1#2{U_{#1}^{(#2)}}   % U_j^(ℓ)
\newcommand{\Uleft}{U_{<L}}    % all blocks with j<p
\newcommand{\Uright}{U_{>L}}   % all blocks with j>p
\newcommand{\Upre}[1]{U_{L}^{(<#1)}}   % ℓ = 1 … r-1
\newcommand{\Upost}[1]{U_{L}^{(\ge #1)}}% ℓ = r … L

\begin{equation}
\Uleft = \prod_{\ell=1}^{L-1}\prod_{p=1}^{N_\text{P}}U^{(\ell,p)},
\ \ \ 
\Uright = \prod_{\ell=L+1}^{N_\text{L}}\prod_{p=1}^{N_\text{P}}U^{(\ell,p)},
\end{equation}

\begin{equation}
\Upre{r} = \prod_{p=1}^{r-1}U^{(L,p)},
\ \ \ 
\Upost{r} = \prod_{p=r}^{N_\text{P}}U^{(L,p)},
\end{equation}
where ${\cal H}^{(\ell,p)}$ is defined in Eq.~\eqref{eqn_hlp}. Using this notation, the derivative reads
\begin{equation}
    \displaystyle
    \frac{\partial U_{\text{R,T}}}{\partial \tau_{L}}
    =\Uleft\;
    \left[
         \sum_{r=1}^{N_\text{P}}
         \frac{(-i)}{N_\text{P}}\,\Upre{r}\,{\cal H}^{(L, r)}\,\Upost{r} 
    \right]\;
    \Uright,
\end{equation}
where we used that all pulses in a layer have the same time duration.

\subsection{GRAPE}
For the GRAPE algorithm, computing the gradient of the unitary propagator with respect to the control amplitudes generally requires the Fr\'echet derivative of the matrix exponential. For time step \(j\) with piecewise-constant control \(u(j)\), the Hamiltonian reads ${\cal H}_j \equiv {\cal H}_0 + u(j)\,{\cal H}_c$, and the corresponding propagator is
\begin{equation}
U_j \equiv U_j(\Delta t) = \exp\!\left(-i\Delta t {\cal H}_j\right).
\end{equation}
A small perturbation \(u(j) \to u(j) + \delta u(j)\) induces, to first order,
\begin{equation}
\delta U_j = - i\Delta t\,\delta u(j)\, \overline{{\cal H}_c}\, U_j,
\end{equation}
where
\begin{equation}
\overline{{\cal H}_c}\, \Delta t
= \int_{0}^{\Delta t} U_j(\tau)\, {\cal H}_c\, U_j(-\tau)\, d\tau,
\end{equation}
which follows from Duhamel’s formula
\begin{equation}
\left.\frac{d}{dx}\, e^{A + x B}\right|_{x=0}
= \int_{0}^{1} e^{(1-s)A}\, B\, e^{sA}\, ds.
\end{equation}
For sufficiently small time steps (\(\Delta t \ll \|{\cal H}_j\|^{-1}\)), the relation \(\overline{{\cal H}_c} \approx {\cal H}_c\) holds such that the gradient can be approximated using only matrix exponentials as
\begin{equation}
\delta U_j \approx - i\Delta t\,\delta u(j)\,{\cal H}_c U_j,
\quad\text{i.e.,}\quad
\frac{\partial U_j}{\partial u(j)} \approx - i\Delta t\,{\cal H}_c U_j.
\end{equation}
However, approximating the Fr\'echet derivative yields inexact gradients which can change the optimization trajectory and, most critically, the probability of success (the fraction of optimizations that reach the target fidelity). We observe this effect in our Ising-spin-chain case study.

Moreover, the small-\(\Delta t\) approximation usually does not become more accurate as the dimensionality of the problem increases.
The minimum evolution time \(T_{\min}\), below which high-fidelity solutions are unattainable, usually increases with the Hilbert-space dimension \(N\) \cite{GSL1,qsl_spinchain}. For the Ising spin chain, we empirically find \(T_{\min} \propto N = 2^{n}\). This has direct implications for the discretization of time for the GRAPE algorithm. Let \(P\) be the number of optimization parameters, and \(\Delta t = T/P\) the pulse-time discretization. Then, when working with a number of parameters close to the lower bound, \(P \gtrsim 2N - 2\), the discretization does not automatically become finer.

\subsection{RALLY$_\text{A}$}
For the propagator $U_{\text{R,A}}(\bm{\xi})$ of RALLY$_\text{A}$ in Eq.~\eqref{eq:RALLY_AMP}, we introduce the pulse Hamiltonian and the single-pulse propagator
\begin{align}
  {\cal H}^{(\ell,p)}(\xi_\ell)
    &= {\cal H}_0 + \xi_\ell\,u^{(\ell,p)}{\cal H}_c,
  \\
  U^{(\ell,p)}(\xi_\ell)
    &= \exp\left[-i\Delta t\,{\cal H}^{(\ell,p)}(\xi_\ell)\right],
\end{align}
so that
\begin{equation}
  U_{\text{R,A}}(\bm{\xi})
  =
  \prod_{\ell=1}^{N_\text{L}}
  \prod_{p=1}^{N_\text{P}}
  U^{(\ell,p)}(\xi_\ell).
\end{equation}
Using the same notation as in the previous sections, the exact derivative with respect to the layer amplitude $\xi_L$ is
\begin{equation}
  \label{eq:RALLY_A_grad_exact}
  \begin{aligned}
    \frac{\partial U_{\text{R,A}}}{\partial \xi_L}
    &=
    U_{<L}
    \biggl[
      \sum_{r=1}^{N_\text{P}}
        \bigl(-i\Delta t\,u^{(L,r)}\bigr)
        U_{L}^{(<r)}\overline{{\cal H}}_{c}^{(L,r)}
        U_{L}^{(\ge r)}
    \biggr]
    U_{>L},
  \end{aligned}
\end{equation}
where $\overline{{\cal H}}_{c}^{(L,r)}$ is the Duhamel (Fr\'echet) integral of the control Hamiltonian
\begin{equation}
  \label{eq:RALLY_A_Hbar}
  \overline{{\cal H}}_{c}^{(L,r)}\,\Delta t
    = \int_0^{\Delta t}
         U^{(L,r)}(\tau)\,
         {\cal H}_c\,
         U^{(L,r)}(-\tau)\,d\tau
\end{equation}
and
\begin{equation}
  U^{(L,r)}(\tau) = \exp\bigl[-i\tau\,{\cal H}^{(L,r)}(\xi_L)\bigr].
\end{equation}

\section{Theoretical estimates for the runtime scaling of RALLY$_\text{T}$ and GRAPE} \label{app_scaling}
For the GRAPE algorithm, a single gradient evaluation is based on the Fréchet derivative of the matrix exponential, which can be obtained from matrix diagonalization exhibiting a $\mathcal{O}(N^{3})$ scaling (see Appendix~\ref{app:grad} for details). For a control field, at least \(2N-2\) time bins (optimization parameters) are required for ground-state preparation and state-transfer problems and, hence, one optimization iteration of GRAPE involves \(2N-2\) diagonalizations. For \(n_{\text{it}}\) optimization iterations, this yields for the runtime scaling
\begin{equation} \label{eqn_c_grape}
\mathcal{C}_{\text{GRAPE}}=\mathcal{O}\bigl(n_{\text{it}}N^{4}\bigr).
\end{equation}
Note that one can avoid calculating the full Fréchet matrix by directly using Krylov-subspace methods, which typically reduces the runtime scaling to \(\mathcal{O}(n_{\mathrm{it}}N^{3})\). However, for the system sizes considered in the main text, using the Krylov-subspace method resulted in an overall slower calculation and, therefore, full diagonalization was used.

In the RALLY$_\text{T}$ method, diagonalization is performed only once prior to the optimization; for example, diagonalization needs to be performed for the two static Hamiltonians with the control values \(J=\pm 1\) in Sec.~\ref{sec_ising}. This pre-diagonalization exhibits an \(\mathcal{O}(N^3)\) scaling of the runtime. For the optimization, each evaluation of the figure of merit and its gradient involves $N_\text{P}N_\text{L}$ matrix–vector multiplications. For $N_\text{L}\approx 2N-2$ and a fixed layer size $N_\text{P}$, the scaling for $n_{\text{it}}$ iterations hence reads
\begin{equation} \label{eqn_c_rally}
\mathcal{C}_{\rm RALLY_{\text{T}}}
= \underbrace{\mathcal{O}(N^{3})}_{\text{pre-diagonalization}}
+ \underbrace{\mathcal{O}(n_{\text{it}}\,N^{3})}_{\text{per-iteration cost}},
\end{equation}
where the second term usually dominates. From practical calculations, we observe that the number of iterations $n_{\text{it}}$ necessary for finding the solution scales approximately as $n_{\text{it}}=\mathcal{O}(N^{1/2})$ for both RALLY methods and GRAPE.

Comparing the runtime scaling for GRAPE in Eq.~\eqref{eqn_c_grape} and for RALLY$_{\text{T}}$ in Eq.~\eqref{eqn_c_rally} implies that the RALLY$_\text{T}$ method exhibits a more favorable scaling than GRAPE for the system sizes considered.

\section{Finite-bandwidth interpolation between control pulses} \label{app:sigmoid}
To account for finite control bandwidth in experimental settings, we introduce a smooth interpolation for each discontinuous jump between the constant-amplitude pulses in the RALLY$_\text{T}$ propagator $U_\text{R,T}$. When the control is changed from the starting pulse amplitude $u_0$ to the final pulse amplitude $u_1$ over a time interval $[t_0,t_1]$ of duration $\tau_{\mathrm{rise}} = t_1 - t_0$, we use the interpolation $f_{\mathrm{int}}(t)=u_0+ (u_1-u_0)s(t)$ with $t \in [t_0,t_1]$ and the shifted sigmoid function
\begin{equation}
  s(t)
  = \frac{1}{2}\left[
      1 + \tanh\!\left(\frac{k}{2}\,(t - t_\text{m})\right)
    \right],
\end{equation}
where
\begin{equation}
  k 
  = \frac{2}{\tau_{\mathrm{rise}}}
    \ln\!\left(\frac{1 - \varepsilon}{\varepsilon}\right),
  \qquad 
  t_\text{m} = \frac{t_0 + t_1}{2}.
\end{equation}
The time $\tau_{\mathrm{rise}}$ determines the duration of the transition centered at $t_\text{m}$, while $\varepsilon$ controls the shape. The steepness $k$ is fixed by $\tau_{\mathrm{rise}}$ and $\varepsilon$ so that $f_{\mathrm{int}}(t_\text{m} \pm \tau_{\mathrm{rise}}/2)$ is $\varepsilon$ away from $u_1$ and $u_0$, respectively.

For the numerical simulations, we divide $[t_0,t_1]$ into $N_{\mathrm{int}}$ equal intervals of length $\Delta t = \tau_{\mathrm{rise}} / N_{\mathrm{int}}$ and approximate the continuous interpolation by a product of short-time propagators with constant Hamiltonian in these intervals. We define $u_n = f_{\mathrm{int}}(t_n)$ for $t_n = t_0 + (n - 1/2)\Delta t$ and $n = 1,\dots,N_{\mathrm{int}}$. Using ${\cal H}(u_n) = {\cal H}_0 + u_n{\cal H}_c$, the unitary propagator associated with a single interpolation reads
\begin{equation}
  U_{\mathrm{rise}}(u_0 \to u_1)
  = \prod_{n=1}^{N_{\mathrm{int}}}
    \exp\biggl[-i\,\Delta t\,{\cal H}(u_n)\biggr].
\end{equation}
In the RALLY$_\text{T}$ propagator $U_\text{R,T}$, an interpolation propagator $U_{\mathrm{rise}}$ is
inserted between each pair of constant-amplitude pulses, so that each instantaneous jump $u_0 \to u_1$ is replaced by a smooth
finite-time interpolation. This yields a pulse sequence with finite-bandwidth pulses. Importantly, these interpolation propagators can be prepared prior to the optimization for RALLY$_\text{T}$, because the pulse amplitudes are randomly chosen in advance and the interpolation does not depend on the pulse durations which are varied. This reduces the computational overhead for including finite control bandwidth in the optimization significantly.

For the specific optimization problem studied in Sec.~\ref{sec_ising}, we chose $\tau_{\mathrm{rise}} = 10$, $N_{\mathrm{int}} = 100$ and $\varepsilon = 10^{-10}$ for each interpolation propagator.

\bibliography{bibliography}

\end{document}